\renewcommand\footnotetextcopyrightpermission[1]{} 
\newcommand{\my@arrow}[1]{\ooalign{$#1-\mkern-8mu-$\cr\hidewidth$#1>$}}
\newcommand{\myarrow}{\mathrel{\mathpalette\my@arrow\relax}}
\newcommand{\vn}[1]{\textsf{\small #1}}
\newcommand{\fn}[1]{\textsc{#1}}
\newcommand{\junchang}[1]{\textcolor{blue}{[Junchang's note]}}
\newcommand{\sn}{\fn{DHash}\xspace}
\newcommand{\sns}{\fn{DHash}} 
\begin{document}
\sloppy

\title{\fn{\sns}: Enabling Dynamic and Efficient Hash Tables}

\author{Junchang Wang$^\star$ \ \ \ \ \ \ \ \ \ \ \ \ \ \ \ \ 
Xiong Fu$^\star$  \ \ \ \ \ \ \ \ \ \ \ \ \ \ \ \ 
Fu Xiao$^\star$ \ \ \ \ \ \ \ \ \ \ \ \ \ \ \ \ 
Chen Tian$^\dagger$}

\affiliation{%
  \institution{$^\star$Nanjing University of Posts and Telecommunications \ \ \ \ \ \ \ \ \ \ \ \ \ \ \ \ 
  $^\dagger$Nanjing University
  }
}

\begin{abstract}
  
Given a specified average load factor,
hash tables offer the appeal of constant time lookup operations.
However, hash tables could face severe hash collisions because of malicious attacks, 
buggy applications, or even bursts of incoming data,
compromising this practical advantage.
In this paper, we present \sns, a hash table that overcomes this challenge
by allowing programmers to dynamically change its hash function on the fly,
without affecting other concurrent operations such as
lookup, insert, and delete.
\sn is modular and allows programmers to select a variety of lock-free/wait-free set algorithms
as the implementation of hash table buckets.
With this flexibility, they can make trade-offs between the algorithm's progress guarantee,
performance, and engineering efforts, 
and create \sn implementations that meet their requirements best.
Evaluations on three types of architectures show that 
\sn noticeably outperforms other practical alternatives under heavy workloads.
With a load factor of 20, \sn outperforms the other three most widely used hash tables
by factors of 1.4-2.0,
and when the load factor increases to 200,
\sn is 2.3-6.2 times faster.
\end{abstract}

\keywords{Hash Table, Concurrent Data Structures, Nonblocking.}


\maketitle

\section{Introduction}\label{sec.intr}

Given a specified average load factor,
hash tables bring the appeal to offer constant time lookup operations,
and hence have been widely used in operating systems and applications.
However, it is widely known
that hash tables are vulnerable to hash collisions \cite{Crosby2003DenialOS},
and randomizing static hash function is not a complete solution
\cite{dos2ht}.
For years, this vulnerability affected a long list of operating systems and programming languages
such as the Linux kernel \cite{linuxHT}, 
Perl \cite{Crosby2003DenialOS} and PHP \cite{phpdos}.

One possible solution is dynamically changing a hash table's hash function,
without affecting concurrent \emph{insert}, \emph{delete}, and \emph{lookup} operations
(henceforth simply \emph{common} operations).
We use the term \emph{dynamic} to describe a hash table
that can provide this flexibility feature,
and use the term \emph{rebuild} to describe
the function that dynamically changes its hash function.
Researchers have proposed several dynamic hash tables.
Herbert Xu created a dynamic hash table for 
the networking subsystem of the Linux kernel in 2010 \cite{Herbert2010Hashing}, 
to handle unpredictable large bursts of fragmented packets \cite{burst}
and potential DoS attacks.
Thomas Graf created another generic dynamic hash table in 2014 \cite{rhashtable},
which since then has been widely used in the Linux kernel.
Other researchers partially overcame this challenge
by proposing hash table algorithms that can only
enlarge or shrink bucket sizes by a factor of 2
\cite{Shalev2006SplitorderedLL,Triplett2011ResizableSC,Fatourou2018AnEW},
which we refer to as \emph{resizable} hash tables.

The core problem in designing a dynamic hash table is
how to atomically distribute each node from the old hash table
to the new one in rebuilding.
Prior research activities overcame this challenge by using various techniques
(detailed in Section \ref{sec.related}),
but in practice, we found that they have performance drawbacks
when used in scenarios with heavy workloads, bursts of incoming data,
and/or attacks.
For example, Xu's and Graf's algorithm uses per-bucket locks
to serialize concurrent update operations,
which leads to severe contentions
when the load factors increase to 20 and more.
Resizable hash tables do not need to face this challenge,
but they have limited capabilities to solve hash collisions.

This paper presents \sn (Dynamic Hash table),
a dynamic hash table that
can meet the following main goals.

\noindent
\textbf{(1) Dynamic hash table}:
Users can dynamically change the hash function,
without affecting concurrent common hash table operations.

\textit{Rationale}:
Dynamic hash table is the algorithm of choice for critical applications
facing bursts of update requests, buggy applications, and even malicious attacks.

\noindent
\textbf{(2) Modularity}:
The hash table should be modular; 
it can utilize various lock-free/wait-free set algorithms
as the implementation of hash table buckets, 
without heavy engineering workload.

\textit{Rationale}:
The choice of the algorithm to solve conflicts within each bucket 
is a trade-off between the algorithm's progress guarantee, performance, and engineering efforts.
In practice, many users of hash tables cannot know the
right choice of the algorithm in advance.
For example, wait-free linked lists \cite{kogan2011wait, yang2016wait} are the algorithms of choice 
for users who want the strongest progress guarantee for common hash table operations,
and users who look for fast lookup speed
would choose lock-free skip lists \cite{fomitchev2004lock}.

\noindent
\textbf{(3) Fast and non-blocking lookup operations}:
Lookup operations should be fast and non-blocking,
no matter if a rebuild operation is in progress.

\textit{Rationale}:
Hash tables are commonly designed for use cases
with significant reads than writes.
For example, Herlihy and Shavit suggested a common workload for hash tables
with 90\% lookups and 10\% insertions along with deletions \cite{TheArt}.

\noindent
\textbf{(4) Fast and non-blocking update operations}:
Insert and delete operations should be fast and non-blocking,
even when a rebuild operation is distributing nodes.

\textit{Rationale}:
In modern computer systems, insert and delete requests typically
reach hash tables in batch.
For example, a variety of places in between two servers
(e.g., buffers in hardware Network Card and kernel TCP/IP stacks) 
can buffer network packets and then send them out in batch for higher throughput \cite{Alizadeh2012LessIM}.
Failing to handle the large bursts of update requests could result in 
performance degradation \cite{burst}.

The core of \sns's technical contribution is an efficient rebuilding strategy
that can distribute nodes by using regular operations.
The rebuilding strategy allows \sn to leverage a variety of lock-free/wait-free set algorithms as hash buckets,
without heavy engineering workloads. 
Experimental data shows that 
with light workloads, the overall performance of \sn 
matches or slightly exceeds other practical representatives including
the two dynamic hash table in the Linux kernel \cite{Herbert2010Hashing, linuxHT}
and one resizable hash table based on split-ordered-list \cite{Shalev2006SplitorderedLL}.
\sn noticeably outperforms these algorithms by factors of 2.3-6.2 and more under heavy workloads.

The rest of the paper is organized as follows.
We first discuss related work in Section \ref{sec.related}.
Section \ref{sec.arch.overview} gives an overview of \sns.
Section \ref{sec.arch.impl} presents the details of the algorithm.
We prove the correctness of \sn in Section \ref{sec.correcness},
present evaluations in Section \ref{sec.eva},
and conclude in Section \ref{sec.conclusion}


\section{Related Work}\label{sec.related}

This section sketches a high-level overview of
prior researches on dynamic and resizable hash tables.

\textbf{Herbert Xu's dynamic hash table}\label{sec.related.herbert}:
Herbert Xu created a dynamic hash table \cite{Herbert2010Hashing}
for the management of IGMP packets in the Linux kernel in 2010.
As far as we know, this is the first practical dynamic hash table.
The key idea behind Xu's algorithm is to
manage two sets of pointers in each node,
so that common operations
traverse one set of pointers while the rebuild operation
is updating the other set.
The two sets are exchanged upon the completion of every rebuild operation.
One major benefit of introducing two sets of pointers is that
it is not necessary to delete nodes from the old hash table
while a rebuild operation is in progress.

Xu's hash table algorithm is straightforward and easy to be implemented,
but it has two major drawbacks in practice.
(1) Each bucket contains a common linked list
along with a lock to serialize concurrent update and rebuild operations to the bucket.
(2) A linked list algorithm
must be customized by adding an extra set of pointers,
before it can be used by Xu's hash table.
This not only results in increased memory footprint,
but also prevents Xu's algorithm from utilizing other 
faster linked list algorithms.
In contrast, \sn overcomes these drawbacks in its design.


\textbf{Generic dynamic hash table in Linux kernel}\label{sec.related.rhashtable}:
Thomas Graf introduced a generic dynamic hash table into the Linux kernel 
in 2014 \cite{rhashtable}, and this algorithm has been widely used in the kernel.
Graf's algorithm was originally based on Josh Triplett's ATC'11 paper \cite{Triplett2011ResizableSC},
but has significantly improved in the performance of its rebuild operation.

Graf's hash table maintains a single pointer in each node,
and utilizes a mutex lock to synchronize concurrent update and rebuild operations
to the same bucket.
A rebuild operation traverses the hash table, and always finds a non-empty bucket 
and distributes its last node, by inserting this node into the new hash table 
and then deleting it from the old hash table.
There is a time period during which 
the node can be found in both the old and the new hash table,
and a lookup operation searching the old hash table could be 
erroneously redirected to the new hash table. 
So the lookup operation of Graf's algorithm is designed to tolerate these behaviors.

Graf's algorithm is a practical design. 
However, this algorithm has the following drawbacks.
(1) The rebuild thread must reach the tail of a list to distribute a single node.
(2) It uses locks to serialize updates to a single bucket.
(3) It maintains unordered lists as its buckets, 
which noticeably increases the overhead of lookup operations.
In contrast, \sn overcomes these drawbacks in its design.

\textbf{Resizable hash tables}
The resizable hash tables do not change their hash functions;
they can only enlarge and shrink their bucket sizes by a factor of 2.
Ori Shalev presented the first lock-free resizable hash table
\cite{Shalev2006SplitorderedLL} in 2006.
This algorithm keeps a singly linked list,
and all nodes in the hash table are chained in this list.
In solving the atomic-distribution problem in resizing,
Shalev's algorithm does not move nodes among the buckets,
instead, it moves the buckets among the nodes
by referencing buckets to the proper nodes in the list.
Shalev's algorithm, even though is lock-free, has drawbacks in practice.
(1) It must use a modulo $2^i$ hash function,
which dramatically limits the flexible of the algorithm.
(2) The algorithm must first reverse the bit string of the key
before performing any operations.
Unfortunately, the reverse operation is not always efficient
on platforms where hardware cannot provide special instructions
for bit string processing.

Josh Triplett presented a hash table that can incrementally
shrink and expand by chaining multiple lists together and by 
splitting existing lists, respectively \cite{Triplett2011ResizableSC}.
Triplett's algorithm has drawbacks. For example,
buckets are implemented as unordered lists,
and concurrent insert and delete operations must block until
a concurrent shrink operation finishes publishing the new hash table.

Researchers have proposed wait-free resizable hash tables
that provide the strongest progress guarantee
\cite{liu2014dynamic, Fatourou2018AnEW}.
However, it is not clear how features such as 
duplicated nodes and node replacement,
which is commonly desired in practice,
can be implemented in these wait-free resizable hash tables.

\section{\sn Algorithm Overview}\label{sec.arch.overview}

This section first presents the challenge in designing dynamic hash tables,
and then sketches a high-level overview of the rebuild, lookup, insert, and
delete operations of \sns,
leaving technical details to Section \ref{sec.arch.impl}.

\begin{figure*}[t]
    \centering
    \subfloat[Initial state. The hash table contains two buckets.]{ 
        \includegraphics[scale=0.23, angle=0]{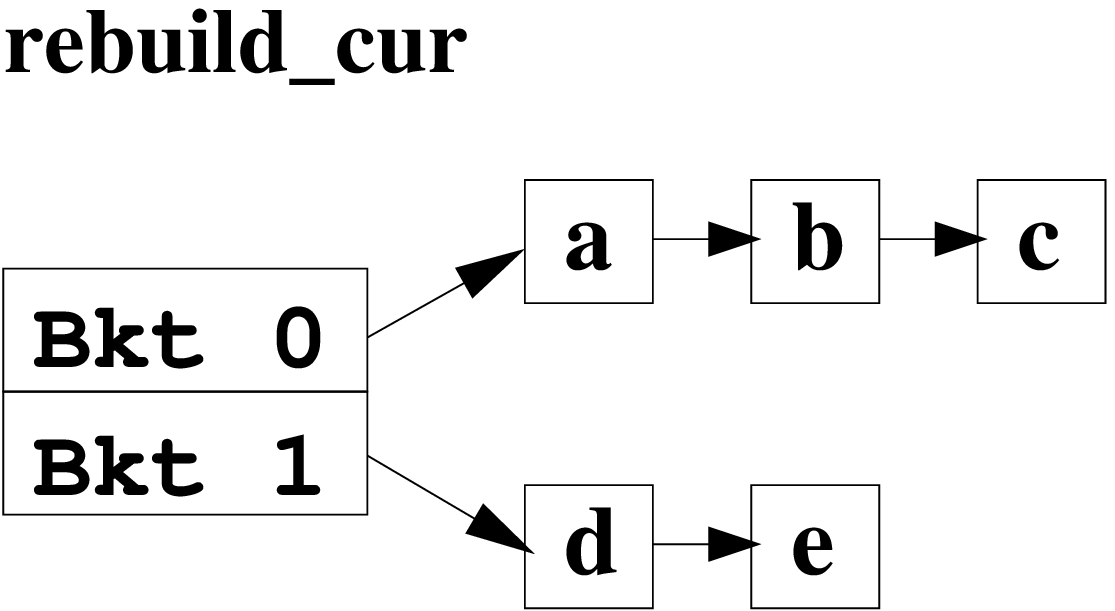}
        \label{fig.rebuild.1}
    }
    \hspace{0.1cm}
    \subfloat[A rebuild operation starts. 
	        \emph{rebuild\_cur} first points to node \textbf{a}.]{
        \includegraphics[scale=0.23, angle=0]{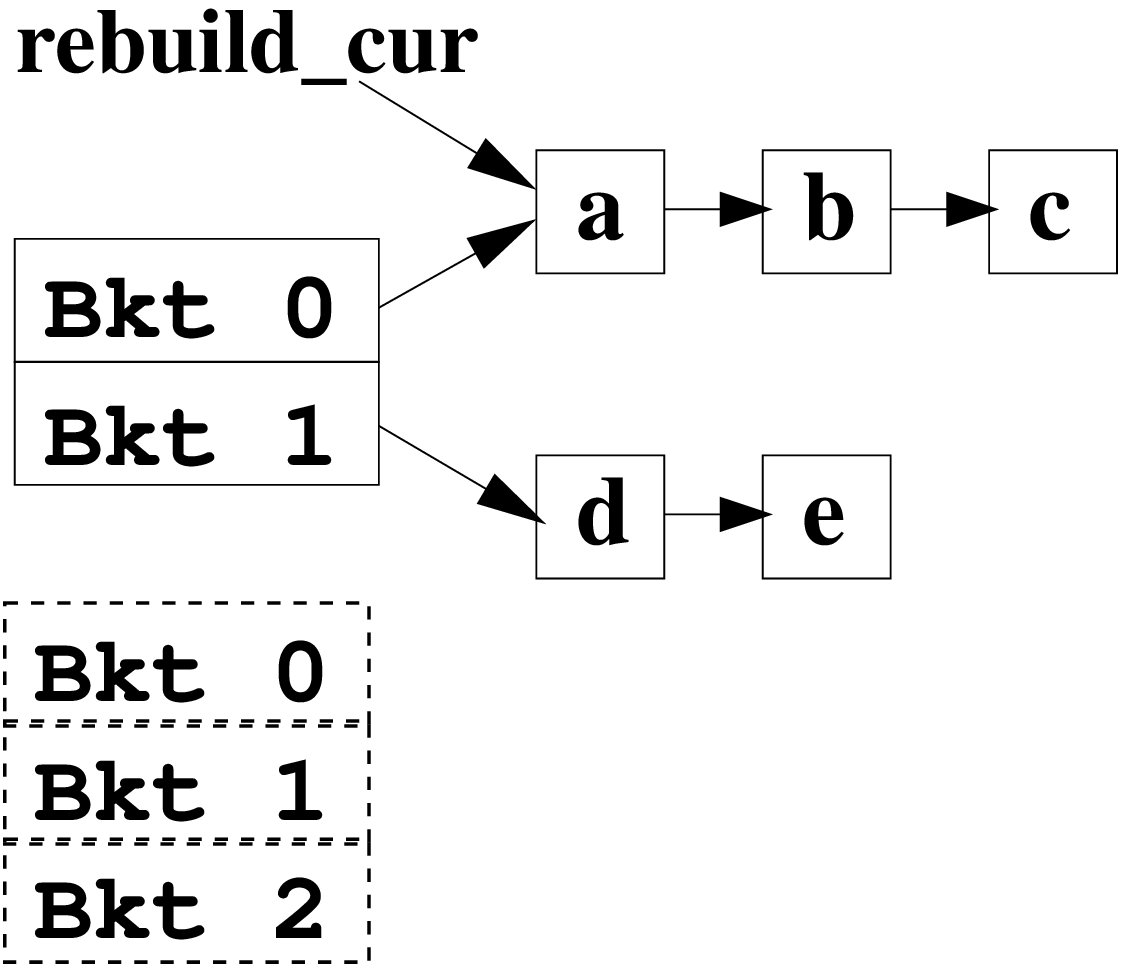}
        \label{fig.rebuild.2}
    }
    \hspace{0.1cm}
    \subfloat[\textbf{a} is deleted from the old hash table.
            \textbf{a} is in \emph{hazard period}.]{
        \includegraphics[scale=0.23, angle=0]{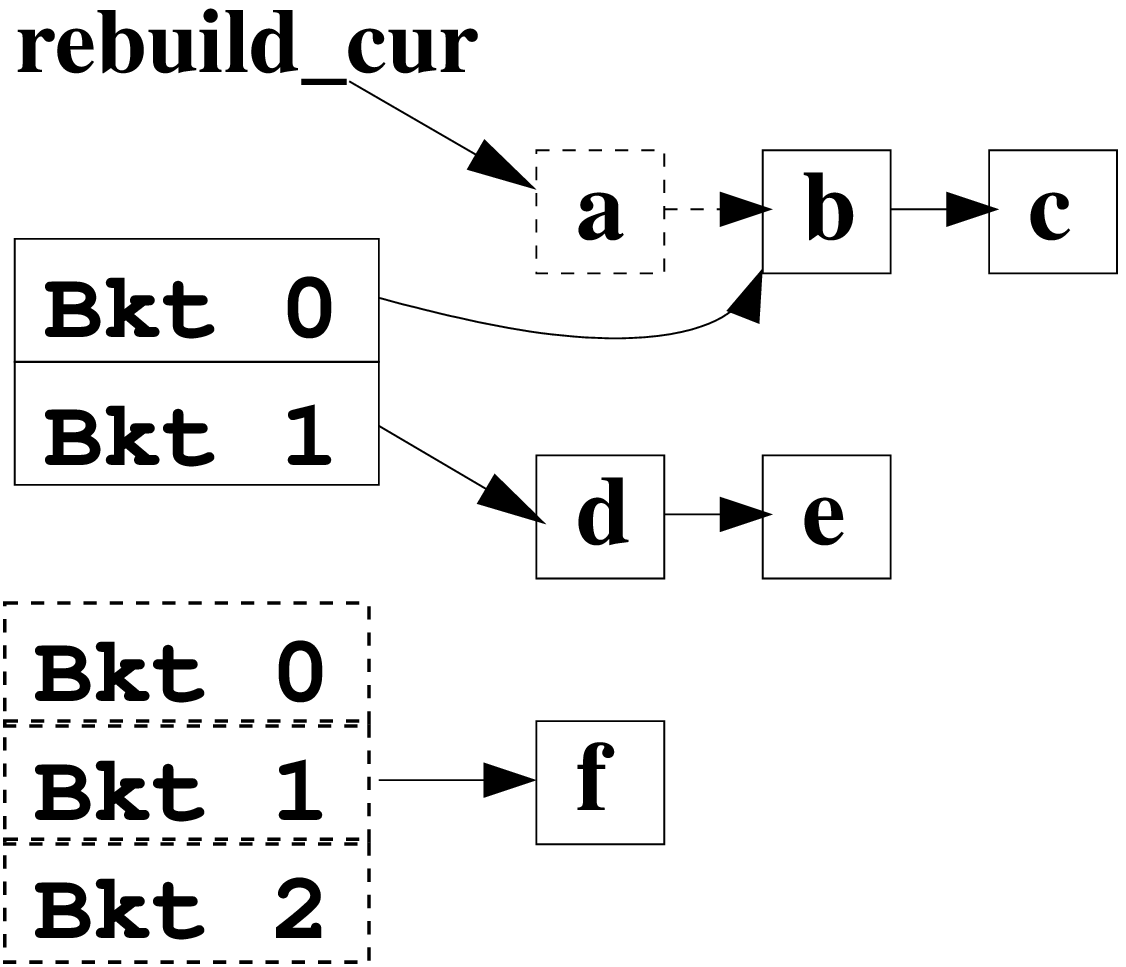}
        \label{fig.rebuild.3}
    }
    \hspace{0.1cm}
    \subfloat[\textbf{a} is inserted into the new table.
		    \emph{rebuild\_cur} is then set to \emph{NULL}.]{
        \includegraphics[scale=0.23, angle=0]{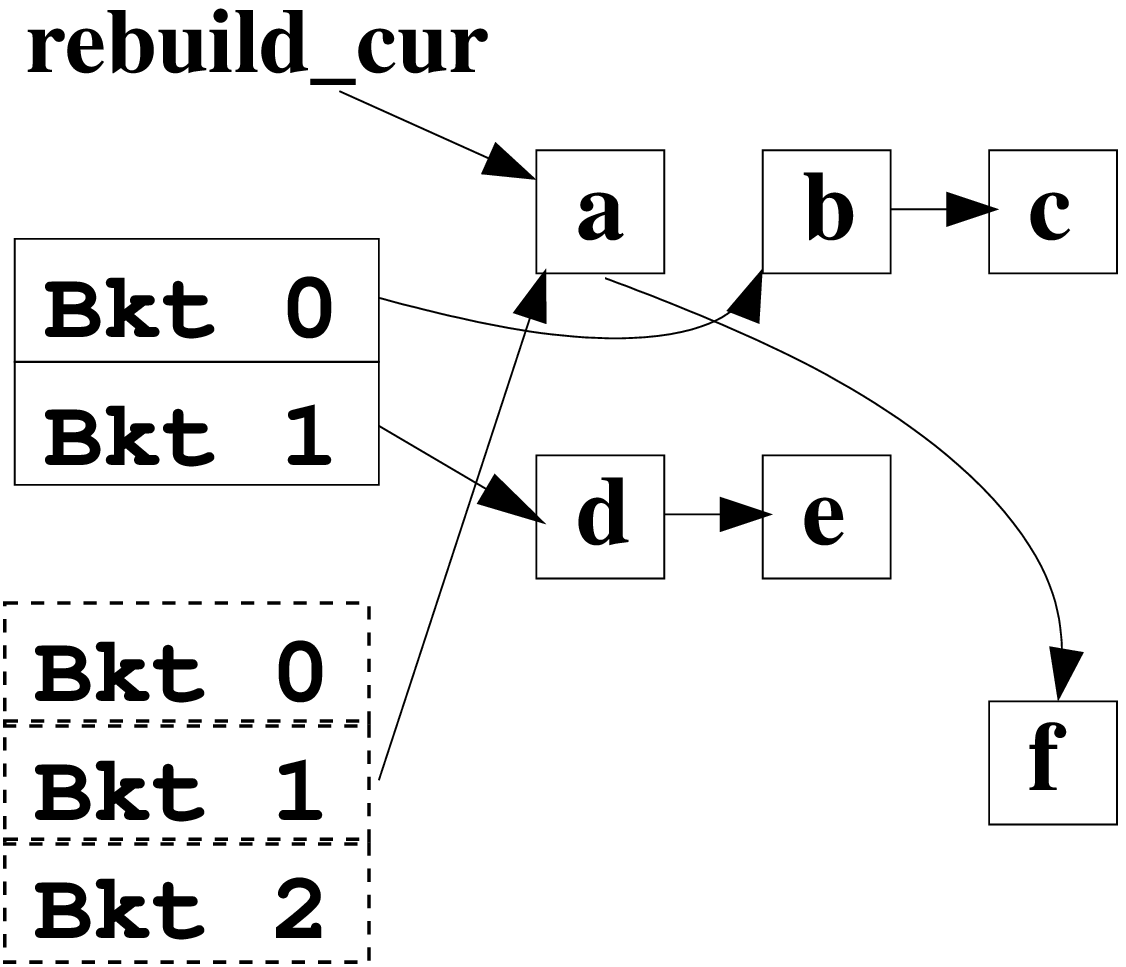}
        \label{fig.rebuild.4}
    }
    \hspace{0.1cm}
    \subfloat[Rebuild traverses the old hash table, and distributes all of the nodes.]{
    \includegraphics[scale=0.23, angle=0]{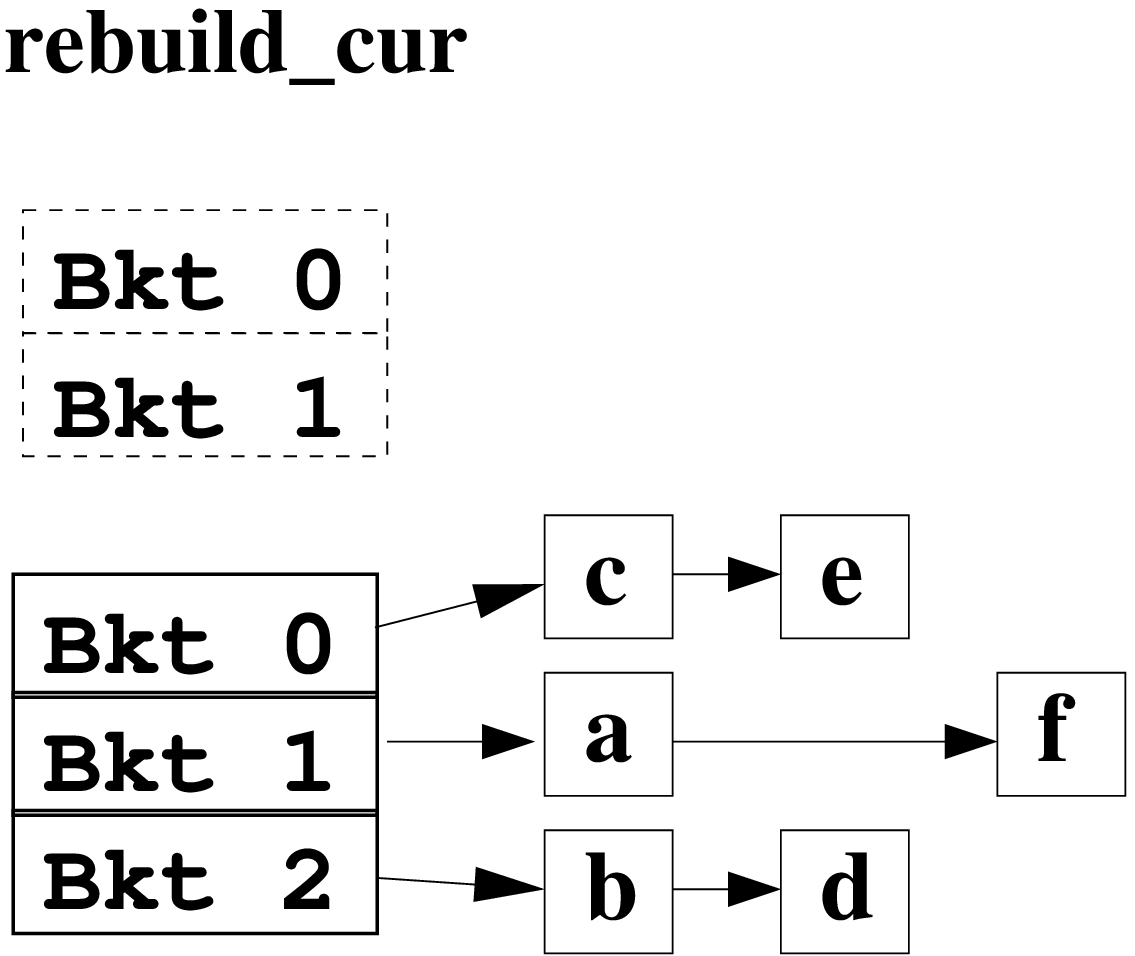}
    \label{fig.rebuild.6}
    }
    \hspace{0.1cm}
    \subfloat[Rebuild waits for prior operations, then frees the old hash table.]{
    \includegraphics[scale=0.23, angle=0]{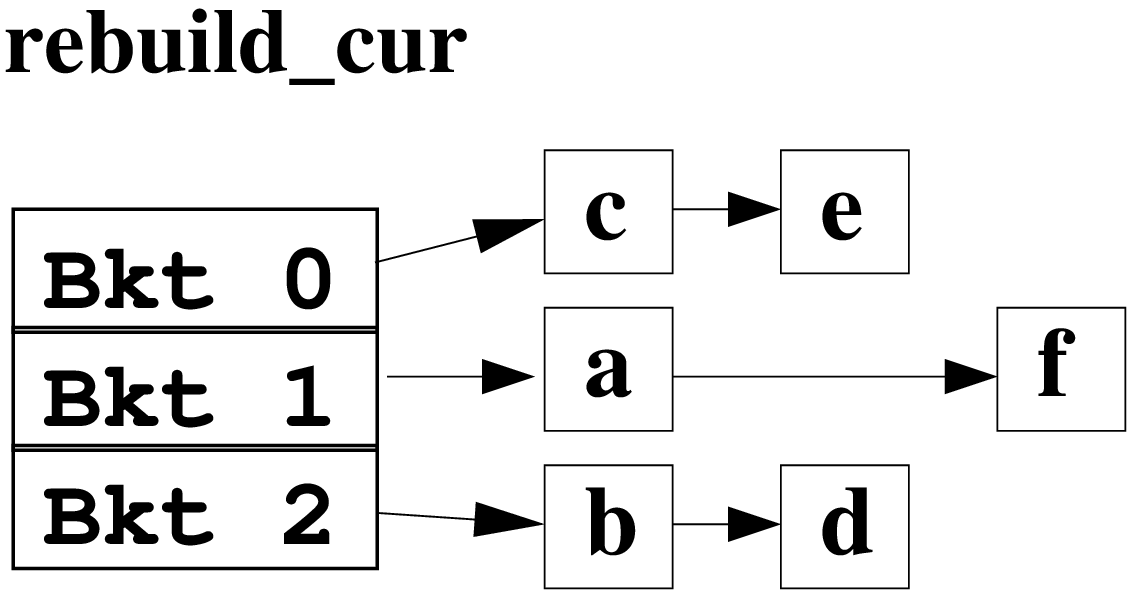}
    \label{fig.rebuild.7}
    }
    \caption{The flow of a \emph{rebuild} operation.
        \emph{Bkt} is short for \emph{Bucket}.
        \emph{rebuild\_cur} is referencing to the node in \emph{hazard period}.}
    \label{fig.rebuild}
\end{figure*}

\textbf{Challenges}
The key challenge in designing a \emph{dynamic} hash table
is to atomically move nodes from the old hash table to the new one,
without affecting concurrent common hash table operations.



This challenge is hard to be efficiently handled
because moving a node must touch more than two buckets (linked lists),
and the transition of the node must be atomic with respect to other concurrent operations.
Prior work solved this issue by 
(1) introducing two sets of linked lists for each bucket \cite{Herbert2010Hashing},
(2) acquiring corresponding per-bucket locks before distributing each node \cite{Herbert2010Hashing, rhashtable},
(3) avoiding moving nodes by adjusting bucket pointers \cite{Shalev2006SplitorderedLL},
and/or (4) maintaining unordered linked lists that sometimes may contain nodes
that do not belong to the linked lists \cite{Triplett2011ResizableSC}.
These approaches, however, sacrifice the algorithms' generality/performance.
\sns, inspired by the RCU technique, takes a fundamentally different approach
by releasing the atomicity requirement in distributing a node;
the rebuild operation first deletes the node from the old hash table
and then inserts it into the new hash table,
by using regular operations,
rather than expensive synchronization and memory fence instructions.
The process of distributing the node leads to a short time period
during which in neither hash tables can this node be found.
We call this a node's \emph{hazard period}.
To allow other operations to be able to access the node,
\sn employs a global pointer that always points to the node that is in \emph{hazard period}.
On the other hand, if a rebuild thread is in progress, 
other operations need to check different locations
because a node may reside in either the new or the old hash table,
or is being referenced by the global pointer.
In Section \ref{sec.arch.impl},
we prove that if a lookup, insert, or delete operation
checks both hash tables and checks the node currently in \emph{hazard period} 
in a specified order, they can always find the node with the matching key and perform correctly.


\subsection{Rebuild operation}\label{sec.arch.overview.rebuild}


\sn consists of a specified number of buckets.
To solve collisions between multiple keys that hash to the same bucket,
a lock-free linked list
is used to chain together nodes containing these keys.
The resulting data structure of \sn is illustrated in Figure \ref{fig.rebuild.1}.
In this example, \sn consists of two buckets: \emph{Bkt 0} and \emph{Bkt 1}.
\emph{Bkt 0} contains three nodes (\emph{a}, \emph{b}, and \emph{c}),
and \emph{Bkt 1} contains two nodes (\emph{d} and \emph{e}).

When a rebuild operation starts,
we assume that the new hash table contains three buckets,
and that the users provide a new hash function that can map
all of the keys to the new three-bucket array.
The rebuild operation performs a hash table traversal and
distributes nodes in the old hash table to the new one.
Figure \ref{fig.rebuild} illustrates the process of moving node \vn{a} to the new hash table,
with the initial state of two buckets shown in Figure \ref{fig.rebuild.1}
and with the time advancing from figure to figure.

Specifically, node \vn{a}
is first referenced by the global pointer \emph{rebuild\_cur}, 
resulting in the state shown in Figure \ref{fig.rebuild.2}.
Then, node \vn{a} is removed from the old hash table and enters its \emph{hazard period},
shown in Figure \ref{fig.rebuild.3}.
When node \vn{a} is in \emph{hazard period},
other lookup and update operations can access it
 via the global pointer \emph{rebuild\_cur}.
Allowing other threads to access the node that is in \emph{hazard period}
is the key reason that distributing a node is not necessary to involve 
expensive atomic operations and memory fences in \sns.
Without loss of generality, we assume that when the rebuild operation is in progress,
other operations concurrently insert a new node, \vn{f}, into the new hash table,
shown in Figure \ref{fig.rebuild.3}.

Then, node \vn{a} is inserted into the new hash table, shown in Figure \ref{fig.rebuild.4}.
After it has been successfully inserted into the new hash table,
\emph{rebuild\_cur} is set to NULL.
The rebuild operation traverses the old hash table
and distributes all of the nodes to the new hash table.
Then the rebuild operation exposes the new hash table to subsequent 
common operations,
shown in Figure \ref{fig.rebuild.6}.
After that, the rebuild operation 
waits for all prior unfinished operations to complete,
before safely reclaiming the old hash table,
shown in Figure \ref{fig.rebuild.7}.


\subsection{Lookup, Insert, Delete operations}\label{sec.arch.overview.others}

When rebuild operations are absent, which is the common case,
a lookup, insert, and delete operation performs common hash table operations
on the only hash table in \sns.
When a rebuild operation is in progress,
a node may be in \emph{hazard period}
during which in neither hash tables can this node be found.
Hence, a lookup, delete, and insert operation must comply with this variation.

\textbf{Lookup}:
A lookup operation must search both the hash tables
and check the node referenced to by the global pointer \emph{rebuild\_cur}.
Since the operation involves multiple shared memory spaces,
synchronizing the rebuild and the lookup operation
is a classic synchronization problem,
which we solve by managing the lookup operation to check these memory spaces
in a specified order.
Specifically, the lookup operation first searches for the node in the old hash table,
then checks if the node pointed to by \emph{rebuild\_cur} has the matching key,
and finally searches in the new hash table.
Lemma \ref{lemma.lookup} in Section \ref{sec.arch.impl.lookup} proves that
if a lookup operation performs in this order,
it can find the node with the matching key, 
no matter if a rebuild operation is in progress.


\textbf{Delete}:
Similarly,
a delete operation first searches in the old hash table
and deletes the node and returns if the node can be found.
Otherwise, it checks the node pointed to by \emph{rebuild\_cur}.
Finally, the delete operation searches in the new hash table.
Lemma \ref{lemma.deletion} in Section \ref{sec.arch.impl.delete} proves that
if a delete operation performs in this order,
it can successfully find and delete the node with the matching key.

\textbf{Insert}:
A rebuild operation waits until all prior unfinished operations have completed
before replacing the old hash table with the new one 
(detailed in Section \ref{sec.arch.impl.rebuild}).
Lemma \ref{lemma.insert} in Section \ref{sec.arch.impl.insert} proves that
when a rebuild operation is in progress,
the insert operation can simply
insert the node into the new hash table and then return.

\section{\sn implementation}\label{sec.arch.impl}

The design of \sn presented in Section \ref{sec.arch.overview} leads to a
relatively straightforward implementation, which is the subject of this
section.

\subsection{Preliminaries}\label{sec.bb.rcu}\label{sec.bb.list}

We first give a brief overview of Read-Copy Update (RCU)
which \sn uses to synchronize concurrent operations.
Note that \sn can also use other synchronization mechanisms
such as reference counters \cite{valois1995lock} 
and hazard pointers \cite{michael2004hazard}.
We then present an RCU-based lock-free linked list
which is used as the implementation of hash table buckets.
Note that \sn is modular, such that the linked list 
can be replaced by other lock-free/wait-free set algorithms.

\textbf{Read-Copy Update}:
RCU distinguishes between read-side code and write-side code
and has the following primitives to synchronize read-write conflicts:

\begin{itemize}[leftmargin=*]
\item \vn{rcu\_read\_lock() / rcu\_read\_unlock()}:
Each time a thread wants to access shared variables,
it accesses them in a read-side critical section,
which begins with the primitive \vn{rcu\_read\_lock()} 
and ends with the primitive \vn{rcu\_read\_unlock()}.
Within a read-side critical section,
the lookup thread is safe to access the shared resources
without needing to worry about the potential issues
that these resources could be freed by other threads at the same time. 
\item \vn{synchronize\_rcu()}: works as a
wait-for-readers barrier.
Each time an updater thread wants to update shared variables 
(e.g., to delete a node),
it first makes the resources unreachable to subsequent lookup operations,
and then invokes \vn{synchronize\_rcu()} to wait
until existing lookup operations to safely complete
before updating the shared variables.
\item \vn{call\_rcu()} is an asynchronous version of \vn{synchronize\_rcu()}.
It can be used by updater threads that do not want to block.
\end{itemize}

RCU synchronizes readers with writers by using constrained access order,
instead of shared variables.
Any RCU-protected node accessed during
a reader is guaranteed to remain unreclaimed
until the reader completes its access
and calls \emph{rcu\_read\_unlock()}.
The production-quality implementations of \emph{rcu\_read\_lock()}
and \emph{rcu\_read\_unlock()} are extremely lightweight;
they have exactly zero overhead in Linux kernels
built for production use with \emph{CONFIG\_PREEMPT=n}
\cite{rcu}
and have extremely close to zero overhead in user-space applications
when the \emph{QSBR flavor} model is used \cite{userspace-rcu},
such that readers of RCU-based data structures can perform
as fast as single-threaded programs.


\textbf{RCU-based lock-free linked list}:
For ease of presentation,
in this paper
we chose Michael's lock-free linked list \cite{Michael2002Hashing}
as the implementation \sns's buckets.
We optimized Michael's algorithm to meet our design goals better.
Specifically,
Michael's original algorithm uses \emph{hazard pointers} \cite{michael2004hazard} 
to synchronize concurrent access to shared variables,
which is robust but involves expensive programming and run-time overhead.
A 64-bits long \emph{tag} field must be added for each node
to prevent the potential ABA-problem \cite{TheArt}.


To overcome these problems,
we created an RCU-based lock-free linked list,
which is based on Michael's algorithm
but leverages \emph{RCU} to efficiently manage read-write conflicts.
The modifications are as follows.
(1) The RCU technique instead of \emph{hazard pointers} is used
as the memory reclamation scheme,
such that the expensive memory fences in traversing the list can be removed.
(2) The \emph{tag} field in each node is saved,
because the RCU technique prevents reclaiming (and hence reusing) nodes
before concurrent lookup operations holding references to these nodes have completed.
(3) To reclaim a node, \emph{call\_rcu} is used, 
such that a delete operation will not be blocked by prior unfinished lookup operations.


\begin{algorithm2e}[ht]
    \footnotesize
    \SetStartEndCondition{ (}{)}{)}\SetAlgoBlockMarkers{}{\}}%
    \SetKwProg{Fn}{}{ \{}{}\SetKwFunction{FRecurs}{void FnRecursive}%
    \SetKwFor{For}{for}{ \{}{}%
    \SetKwIF{If}{ElseIf}{Else}{if}{ \{}{elif}{else \{}{}%
    \SetKwFor{While}{while}{ \{}{}%
    \SetKwRepeat{Repeat}{repeat\{}{until}%
    \AlgoDisplayBlockMarkers\SetAlgoNoLine%
    
    \SetKwInOut{Input}{Input}
    \SetKwInOut{Output}{Output}
    \SetKwInOut{Parameters}{Parameters}
    \SetKwInOut{Variables}{Variables}
    
    \textbf{struct} node \{long key; <node *, flag> next\}\;
    \textbf{define} LOGICALLY\_REMOVED         (1UL < < 0) \\
    \textbf{define} IS\_BEING\_DISTRIBUTED        (1UL < < 1) \\
    \textbf{struct} lflist \{node *head\}\;
    \textbf{struct} snapshot \{node **prev, *cur, *next\}\;
    
    \BlankLine
    \tcc{Search the node with the matching \emph{key} in list \emph{htbp}.}
    \textit{lflist\_find(lflist *htbp, key, snapshort *sp)}

    \tcc{Insert node \emph{htnp} into list \emph{htbp}.}
    \textit{lflist\_insert(lflist *htbp, node *htnp)}

    \tcc{Search the node with the matching \emph{key} in list \emph{htbp},
         set the node's \emph{flag} bit, and try to physically delete the node.}
    \textit{lflist\_delete(lflist *htbp, long key, long flag)}
    
    \caption{Structures and APIs of our lock-free linked list.}
    \label{alg.structure.list}
\end{algorithm2e}

Data structures and the API set of our RCU-based lock-free linked list is presented 
in Algorithm \ref{alg.structure.list}.
The structure \emph{lflist},
which will be used as the implementation of \sns's hash table buckets,
is fundamentally a chain of \emph{nodes}.
For each \emph{node},
the \emph{key} field holds the key value,
the \emph{next} field points to the following node in the linked list if any,
or has a \emph{NULL} value otherwise.
Since pointers are at least word aligned on all currently available architectures,
the two least significant bits of \emph{next}
are used as the \emph{flag} field
indicating if the node is in a special state.
The least significant bit, denoted as \emph{LOGICALLY\_REMOVED},
is used to indicate that a node has been logically removed by a delete operation.
The second to the least significant bit, denoted as \emph{IS\_BEING\_DISTRIBUTED},
is used to indicate that the node has been logically removed from the list
by a rebuild operation.
The difference between these two states is how the node will be reclaimed,
which we will discuss in detail in the following paragraphs.

Structure \emph{snapshot} is to return the search result
to the function invoking \emph{lflist\_find}.
Each time we want to search a node,
an instance of \emph{snapshot} is passed to \emph{lflist\_find}.
Upon the completion of \emph{lflist\_find},
it is guaranteed that the \emph{cur} field of the snapshot points to
the list node which contains the value
that is greater than or equal to the specified search key,
and that \emph{prev} and \emph{next} fields
point to its predecessor node and following node, respectively.

Our RCU-based lock-free linked list provides three basic operations,
\emph{lflist\_find}, \emph{lflist\_insert}, and \emph{lflist\_delete}, as shown in Algorithm \ref{alg.structure.list}.
Before invoking any of these functions, a caller must have entered the RCU read-side critical section
by invoking rcu\_read\_look().
Function \emph{lflist\_insert} and \emph{lflist\_delete} also need the read-side protection because
they need to first traverse the list.
Function \emph{lflist\_delete} takes the third parameter \emph{flag},
which is first stored to the \emph{flag} field of the target node.
Function \emph{lflist\_delete} deletes the matching node from the list and reclaims the node memory
if \emph{flag} is set to \emph{LOGICALLY\_REMOVED}.
In contrast, if \emph{flag} is set to \emph{IS\_BEING\_DISTRIBUTED},
the node memory will not be reclaimed
because the node will be inserted into the new hash table later.
Function \emph{lflist\_delete} does not block;
it uses \emph{call\_rcu} to asynchronously reclaim a node.
Note that \emph{call\_rcu} is safe to be invoked within an RCU read-side critical section.


\subsection{Data structures}\label{sec.arch.impl.ds}

\begin{algorithm2e}[ht]
    \footnotesize
    \SetStartEndCondition{ (}{)}{)}\SetAlgoBlockMarkers{}{\}}%
    \SetKwProg{Fn}{}{ \{}{}\SetKwFunction{FRecurs}{void FnRecursive}%
    \SetKwFor{For}{for}{ \{}{}%
    \SetKwIF{If}{ElseIf}{Else}{if}{ \{}{elif}{else \{}{}%
    \SetKwFor{While}{while}{ \{}{}%
    \SetKwRepeat{Repeat}{repeat\{}{until}%
    \AlgoDisplayBlockMarkers\SetAlgoNoLine%
    
    \SetKwInOut{Input}{Input}
    \SetKwInOut{Output}{Output}
    \SetKwInOut{Parameters}{Parameters}
    \SetKwInOut{Variables}{Variables}
    
    \textbf{struct} ht \{ht *ht\_new; long (*hash)(long key);
                         int nbuckets; lflist *bkts[]\}\;

    \tcc{Global variables}
    struct node *rebuild\_cur\;
    mutex rebuild\_lock\;

    \BlankLine
    \textit{clean\_flag(node *htnp, long flag)} \textbf{\{} atomic\_fetch\_and(htnp->next, $\sim$flag); \textbf{\}}\label{alg.structure.flag1} \\

    \textit{set\_flag(node *htnp, long flag)} \textbf{\{} atomic\_fetch\_or(htnp->next, flag); \textbf{\}}\label{alg.structure.flag2} \\

    \Fn{ht\_alloc(int nbuckets, long (*hash)(long key))}{
        htp->ht\_new := NULL;    htp->hash := hash\;
        htp->nbuckets := nbuckets;  htp->bkts := allocate(nbuckets)\;
        
    }
    
    \caption{Structures and helper functions of \sns.}
    \label{alg.structure}
\end{algorithm2e}

Algorithm \ref{alg.structure} lists the data structures and auxiliary functions of \sns.
The main structure, \emph{ht}, is an array of buckets (\emph{bkts[]}) of size \emph{B},
where \emph{B} is specified by the \emph{nbuckets} field.
Each element of \emph{bkts} is fundamentally a pointer to our RCU-based lock-free
linked list \emph{lflist}.
The \emph{hash} field is a function pointer to the user-specified hash function.
The \emph{ht\_new} field is set to \emph{NULL}
unless a rebuild operation is in progress,
in which case it points to the new hash table
that is going to replace the old one.
The global variable 
\emph{rebuild\_cur} points to the node that is currently in \emph{hazard period}
or equals to \emph{NULL} if there is no such a node in the system,
and the mutex lock \emph{rebuild\_lock} is to serialize attempts to rebuild the hash table.

The two helper functions, \emph{clean\_flag} and \emph{set\_flag},
cleans or sets the \emph{flag} bits 
of the node pointed to by \emph{htnp}.
Since the \emph{next} field of a node could be updated 
by concurrent operations, these two operations must perform atomically.
The helper function \emph{ht\_alloc} creates a hash table,
by allocating the array of buckets
and assigning the user-specified hash function to the \emph{hash} field.


\subsection{Solving read-write conflicts}\label{sec.alg.details.rwconflicts}

There are read-write conflicts between \sns's lookup and delete operations,
and between common operations and rebuild operations.
\sn solves this issue by leveraging the RCU synchronization mechanism,
which is discussed in Section \ref{sec.bb.rcu}.
Specifically, a caller must first enter an RCU read-side critical section
before invoking \sns's common operations and referencing any node in \sns,
shown in the following code snippet.

\begin{footnotesize}
\begin{verbatim}
    rcu_read_lock();
    node *cur = ht_lookup(htp, key);
    /* Accessing *cur is safe here. */
    rcu_read_unlock();
    /* Accessing *cur becomes unsafe. */
\end{verbatim}
\end{footnotesize}

\subsection{Rebuild operation}\label{sec.arch.impl.rebuild}

The rebuild operation is shown in Algorithm \ref{alg.rebuild}.
Line \ref{alg.rebuild.trylock} attempts to acquire the global lock \emph{rebuild\_lock},
which serializes concurrent rebuild requests.
Once \sn has the lock, it checks again that the rebuild is still required
on line \ref{alg.rebuild.doublecheck}.
Line \ref{alg.rebuild.alloc} allocates a new hash table
which has the user-specified size and hash function.
Line \ref{alg.rebuild.assign1} assigns the reference to the new hash table to
the \emph{ht\_new} field of the old hash table,
allowing subsequent operations to access the new hash table.
Line \ref{alg.rebuild.sync1} performs an RCU synchronization barrier
to wait for prior common operations, which may not be aware of the new hash table,
to complete before the rebuild operation continues.

\begin{algorithm2e}[ht]
    \footnotesize
    \SetStartEndCondition{ (}{)}{)}\SetAlgoBlockMarkers{}{\}}%
    \SetKwProg{Fn}{}{ \{}{}\SetKwFunction{FRecurs}{void FnRecursive}%
    \SetKwFor{For}{for}{ \{}{}%
    \SetKwIF{If}{ElseIf}{Else}{if}{ \{}{elif}{else \{}{}%
    \SetKwFor{While}{while}{ \{}{}%
    \SetKwRepeat{Repeat}{repeat\{}{until}%
    \AlgoDisplayBlockMarkers%
    
    \SetKwInOut{Input}{Input}
    \SetKwInOut{Output}{Output}
    \SetKwInOut{Parameters}{Parameters}
    \SetKwInOut{Variables}{Variables}
    
    \Parameters{\textit{nbuckets}: Number of buckets of the new hash table. \\
    \textit{hash}: User-specified hash function of the new hash table.}
    \BlankLine

    \Fn{void ht\_rebuild(ht *htp, \emph{nbuckets}, \emph{hash})}{
        \textbf{If} \textit{( trylock(rebuild\_lock) != SUCCESS )}  \textbf{return} -EBUSY\label{alg.rebuild.trylock}\;
        \textbf{If} \textit{( ! rebuild\_is\_required() )} \textbf{return} -EPERM\label{alg.rebuild.doublecheck}\;
        htp\_new := ht\_alloc(nbuckets, hash)\label{alg.rebuild.alloc}\;
        htp->ht\_new := htp\_new\label{alg.rebuild.assign1}\;
        \tcc{Wait for operations not aware of \emph{htp\_new}.}
        synchronize\_rcu()\label{alg.rebuild.sync1}\;

        \For{each bucket \emph{htbp} in htp\label{alg.rebuild.for1}} {
	    \For{each node \emph{htnp} in \emph{htbp}\label{alg.rebuild.for2}}{
                rebuild\_cur := htnp\label{alg.rebuild.cur1}\;
                \textbf{smp\_wmb()}\label{alg.rebuild.mb1}\;
		        key := htnp->key\;
                \If{lflist\_delete(htbp, key, IS\_BEING\_DIST)  != SUCCESS\label{alg.rebuild.del}\label{alg.rebuild.inner1}}{
                    \textbf{continue}\label{alg.rebuild.cont}\;
                }

                prepare\_node(htnp)\label{alg.rebuild.prepare}\;
                
                htbp\_new := htp\_new->bkts[htp\_new->hash(key)]\;
                \If{lflist\_insert(htbp\_new, htnp) != SUCCESS\label{alg.rebuild.insert}}{
                    call\_rcu(htnp, free)\label{alg.rebuild.freenode}\;
                }\label{alg.rebuild.inner2}
                \textbf{smp\_wmb()}\label{alg.rebuild.mb2}\;
                rebuild\_cur := NULL\label{alg.rebuild.cur2}\;
            }\label{alg.rebuild.for2end}
        }\label{alg.rebuild.for1end}
        \tcc{Wait for operations accessing nodes via \emph{htp->bks[]}.}
        synchronize\_rcu()\label{alg.rebuild.sync2}\;

        htp\_tmp := htp; \hspace{0.1cm} htp := htp\_new\label{alg.rebuild.assign2}\;

        \tcc{Wait for operations referencing to old hash table.}
        synchronize\_rcu()\label{alg.rebuild.sync3}\;
        unlock(rebuild\_lock)\label{alg.rebuild.unlock}\;
        free(htp\_tmp)\label{alg.rebuild.freeht}\;
        \textbf{return} \textit{SUCCESS}\label{alg.rebuild.return}\;
    }

    \caption{Rebuild operation of \sns.}
    \label{alg.rebuild}
\end{algorithm2e}

Function \emph{ht\_rebuild} traverses the old hash table,
and one-by-one distributes nodes to the new hash table
(Lines \ref{alg.rebuild.cur1}--\ref{alg.rebuild.cur2}).
For each node, the global variable \emph{rebuild\_cur} first points to
the node on line \ref{alg.rebuild.cur1}.
The two write barriers on lines \ref{alg.rebuild.mb1} and \ref{alg.rebuild.mb2}
pair with the read barriers in \emph{ht\_lookup} and \emph{ht\_delete}.
They together guarantee that the updates performed by \emph{ht\_rebuild}
to \emph{rebuild\_cur} and the two hash tables
can be seen by other operations in the same order.
Note that, for ease of presentation, we omit memory order specifications in the pseudo code.
In practice, all accesses to bucket pointers (e.g., \emph{htbp}), node pointers (e.g., \emph{htnp}),
and \emph{rebuild\_cur} 
must be made with the specifications of std::memory\_order\_acquired or release \cite{mmspec}.

Line \ref{alg.rebuild.del} deletes the node from the old hash table.
Function \emph{lflist\_delete} takes a third parameter \emph{IS\_BEING\_DISTRIBUTED},
indicating that the node with the matching key will be deleted from the old hash table,
but its memory will not be reclaimed.
If this delete operation fails,
which implies that the node has been deleted by other concurrent delete operations 
since the reference to the node was fetched on line \ref{alg.rebuild.for2},
the rebuild process skips this node (line \ref{alg.rebuild.cont}).
Line \ref{alg.rebuild.prepare} prepares the node for reuse by, for example,
cleaning the \emph{IS\_BEING\_DISTRIBUTED} bit of the node.
Then, line \ref{alg.rebuild.insert} inserts it into the proper bucket of the new hash table.
Note that if the insertion operation fails,
which means that one other node with the same key value
has been inserted into the new hash table by other threads,
line \ref{alg.rebuild.freenode} invokes \emph{call\_rcu}
which frees the node after currently unfinished operations 
referencing to this node have completed.
After the node has been inserted into the new hash table,
the global pointer \emph{rebuild\_cur} is set back to \emph{NULL}. 

After distributing all node of the old hash table,
line \ref{alg.rebuild.sync2} waits for unfinished common operations,
which may still hold references to the old hash table, to complete.
Line \ref{alg.rebuild.assign2} installs the new hash table as the current one,
and again line \ref{alg.rebuild.sync3} waits for all unfinished operations.
Then, line \ref{alg.rebuild.unlock} releases the global lock,
line \ref{alg.rebuild.freeht} frees the old hash table,
and finally line \ref{alg.rebuild.return} returns success.

For each iteration, \emph{ht\_rebuild()} 
deletes a node from the old hash table and then
inserts it into the new hash table,
reusing the node's memory.
One potential issue with the reuse of nodes is that it may 
redirect concurrent lookup operations that are traversing the old hash table
to the wrong lists.
For example, suppose that a lookup operation is traversing a hash bucket of the old hash table
and is referencing to node $\alpha$.
At this time, the rebuild operation distributes $\alpha$ 
by inserting $\alpha$ into the proper hash bucket in the new hash table.
This can redirect the lookup operation to the linked list in the new hash table,
and result in a false negative if the node with the matching key
is at the bottom of the linked list in the old hash table.
There are two approaches to overcoming this problem.
(1) The last nodes of the lists store corresponding bucket id's.
Once a lookup operation reaches the last node of a specified bucket,
it read the id value from the node,
and starts over if the value is not what expected.
(2) The lookup operation checks if $\alpha$ has been deleted
before moving forward to subsequent nodes.
For the lock-free linked list algorithm presented in this paper,
deleting a node is performed by setting the least-significant two bits of its \emph{next} field, 
such that the two steps (checking deletion and moving forward) 
can be performed atomically by using one \vn{compare-and-swap} operation
on the \emph{next} field of node $\alpha$.
The generic hash table in the Linux kernel \cite{rhashtable} uses the first approach, and \sn uses the second.



\subsection{Lookup operation}\label{sec.arch.impl.lookup}

The lookup operation is presented in Algorithm \ref{alg.lookup}.
The function first searches for the specified \emph{key} in the proper bucket
of the old hash table (line \ref{alg.lookup.find1}).
If a node with the matching key can be found in the bucket,
a pointer referencing to the node is returned (line \ref{alg.lookup.return1}).
Otherwise, line \ref{alg.lookup.checkresize} checks whether a rebuild operation is in progress.
If rebuild operations are absent,
line \ref{alg.lookup.return2} returns \emph{-ENOENT} indicating that
no node with the matching key can be found in \sns.
The two read barriers on lines \ref{alg.lookup.mb1} and \ref{alg.lookup.mb2}
pair with the two write barriers in \emph{ht\_rebuild}.
Line \ref{alg.lookup.global} continues the lookup operation
by checking the node pointed to by the global pointer \emph{rebuild\_cur}.
Recall that \emph{rebuild\_cur} always points to the node
that is currently in \emph{hazard period}.
If the node pointed to by \emph{rebuild\_cur} matches,
and if the LOGICALLY\_REMOVED bit of the \emph{next} field of the node has not been set,
which means that the node has not been deleted by concurrent delete operations,
line \ref{alg.lookup.return22} returns a pointer to the node.
Otherwise, function \emph{lookup} continues by searching the new hash table
and returns the pointer to the node if the \emph{lflist\_find} operation succeeds
(line \ref{alg.lookup.find2}).

\begin{algorithm2e}[ht]
    \footnotesize
    \SetStartEndCondition{ (}{)}{)}\SetAlgoBlockMarkers{}{\}}%
    \SetKwProg{Fn}{}{ \{}{}\SetKwFunction{FRecurs}{void FnRecursive}%
    \SetKwFor{For}{for}{ \{}{}%
    \SetKwIF{If}{ElseIf}{Else}{if}{ \{}{elif}{else \{}{}%
    \SetKwFor{While}{while}{ \{}{}%
    \SetKwRepeat{Repeat}{repeat\{}{until}%
    \AlgoDisplayBlockMarkers%

    \SetKwInOut{Input}{Input}
    \SetKwInOut{Output}{Output}
    \SetKwInOut{Parameters}{Parameters}
    \SetKwInOut{Variables}{Variables}
    
    \textbf{Local variables: } struct snapshot ss;   struct node *cur, *htbp, *htbp\_new\;
    \Fn{node *ht\_lookup(ht *htp, long key)} {
        htbp := htp->bkts[htp->hash(key)]\label{alg.lookup.bkt1}\;
        \textbf{if} (\textit{lflist\_find(htbp, key, \&ss) = \emph{SUCCESS}}) \label{alg.lookup.find1}
            \textbf{\{} \textbf{return} ss.cur\label{alg.lookup.return1}\label{alg.lookup.find1end} \textbf{\}}\;
        
        \textbf{if} (\textit{htp->ht\_new = NULL\label{alg.lookup.checkresize}})
            \textbf{\{} \textbf{return} -ENOENT\label{alg.lookup.return2}\label{alg.lookup.checkresizeend} \textbf{\}}\;
        \textbf{smp\_rmb()}\label{alg.lookup.mb1}\;
        cur := rebuild\_cur\label{alg.lookup.assignrebuild}\;
        \If{cur and (cur->key = key) and !logically\_removed(cur)\label{alg.lookup.global}}{
            \textbf{return} cur\label{alg.lookup.return22}\;
        }
        \textbf{smp\_rmb()}\label{alg.lookup.mb2}\;
        htp\_new := htp->ht\_new\;
        htbp\_new := htp\_new->bkts[htp\_new->hash(key)]\;
        \textbf{if} (\textit{lflist\_find(htbp\_new, key, \&ss) = \emph{SUCCESS}}) \label{alg.lookup.find2}
            \textbf{\{} \textbf{return} ss.cur\label{alg.lookup.return3} \textbf{\}}\;
        \textbf{else} \textbf{return} -ENOENT\label{alg.lookup.return4}\;

    }
    
    \caption{Lookup operation of \sns.}
    \label{alg.lookup}
\end{algorithm2e}


Algorithm \ref{alg.lookup} shows that
a lookup() operation first searches for the node with the matching key 
in the old hash table (Line \ref{alg.lookup.find1}),
then checks if the node pointed to by \emph{rebuild\_cur} is the right node
(Line \ref{alg.lookup.global}), 
and finally searches in the new hash table (Line \ref{alg.lookup.find2}).
This manipulation order guarantees that lookup operations
can always find the node even if a rebuild operation is in progress.
That is, the following lemma holds:

\begin{lemma}
    If \sn contains node $\alpha$ with key value of \vn{K},
    operation ht\_lookup(\vn{K}) can return a pointer to $\alpha$,
    no matter if a rebuild operation is in progress.
    \label{lemma.lookup}
\end{lemma}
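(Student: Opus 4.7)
The plan is to prove the lemma by case analysis on the state of node $\alpha$ at each of the three checkpoints that \emph{ht\_lookup} performs: the search of the old hash table (Line \ref{alg.lookup.find1}), the test of \emph{rebuild\_cur} (Line \ref{alg.lookup.global}), and the search of the new hash table (Line \ref{alg.lookup.find2}). Since the only way $\alpha$ can leave either table without being reclaimed is through the distribution code on Lines \ref{alg.rebuild.cur1}--\ref{alg.rebuild.cur2} of Algorithm \ref{alg.rebuild}, I would introduce five totally ordered ``distribution states'' for $\alpha$: state 0, $\alpha$ in old and \emph{rebuild\_cur}$\neq\alpha$; state 1, $\alpha$ still in old and \emph{rebuild\_cur}$=\alpha$ (after Line \ref{alg.rebuild.cur1}); state 2, $\alpha$ unlinked from old and \emph{rebuild\_cur}$=\alpha$ (after Line \ref{alg.rebuild.del}); state 3, $\alpha$ in new and \emph{rebuild\_cur}$=\alpha$ (after Line \ref{alg.rebuild.insert}); state 4, $\alpha$ in new and \emph{rebuild\_cur}$\neq\alpha$ (after Line \ref{alg.rebuild.cur2}). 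In a single rebuild iteration these states advance monotonically. If no rebuild is in progress $\alpha$ is trivially in state 0 of the unique table.

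For the no-rebuild case the lookup invokes \emph{lflist\_find} on the bucket selected by \emph{htp->hash(key)}, which contains $\alpha$ by hypothesis, so a pointer to $\alpha$ is returned on Line \ref{alg.lookup.return1}. If a rebuild begins mid-lookup, the \emph{synchronize\_rcu()} on Line \ref{alg.rebuild.sync1} guarantees that any reader that started before \emph{htp->ht\_new} was published finishes before \emph{ht\_rebuild} touches the old table, so either the lookup already completed at Line \ref{alg.lookup.find1} or it will observe \emph{htp->ht\_new}$\neq$NULL on Line \ref{alg.lookup.checkresize} and proceed to the remaining checks.

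For the rebuild case I would branch on the state $S_1$ observed by Line \ref{alg.lookup.find1}. If $S_1\in\{0,1\}$, $\alpha$ is still linked in the old bucket and \emph{lflist\_find} returns it. Otherwise $S_1\ge 2$, and I would examine the state $S_3$ observed by Line \ref{alg.lookup.assignrebuild}. The core argument is the monotonicity $S_3\ge S_1\ge 2$: the write barrier on Line \ref{alg.rebuild.mb1} (paired with the read barrier on Line \ref{alg.lookup.mb1}) forces the lookup that witnessed the removal of $\alpha$ from the old list to also witness the preceding assignment \emph{rebuild\_cur := $\alpha$}. Thus, if $S_3\in\{2,3\}$, \emph{rebuild\_cur} points to $\alpha$ and its \emph{LOGICALLY\_REMOVED} bit is clear (it is only tagged \emph{IS\_BEING\_DISTRIBUTED} by rebuild), so Line \ref{alg.lookup.return22} returns $\alpha$. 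If $S_3=4$, then the write barrier on Line \ref{alg.rebuild.mb2} (paired with the read barrier on Line \ref{alg.lookup.mb2}) guarantees that the read of \emph{htp->ht\_new} and the subsequent \emph{lflist\_find} on Line \ref{alg.lookup.find2} observe the insertion of $\alpha$ into the new table performed at Line \ref{alg.rebuild.insert}, so $\alpha$ is returned.

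The main obstacle will be making the monotonicity argument rigorous under arbitrary interleavings, since the ``state'' is defined by writes to two distinct locations (\emph{rebuild\_cur} and the list pointers) and is observed through reads at multiple checkpoints spread over the lookup. I would formalize ``$L_i$ observes state $\ge s$'' via the happens-before relation established by the paired \emph{smp\_wmb}/\emph{smp\_rmb} fences, and then verify that every state transition $s\to s{+}1$ in rebuild is preceded by a write that is ordered before the write observed at the earlier checkpoint. A secondary care point is that, because distributed nodes' memory is reused, \emph{lflist\_find} on the old bucket could in principle follow a pointer whose target has been re-linked into the new table; this is ruled out by the \emph{compare-and-swap}-based deletion check on the \emph{next} field described at the end of Section \ref{sec.arch.impl.rebuild}, which I would cite rather than redevelop.
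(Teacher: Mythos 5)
Your proposal is correct and follows essentially the same route as the paper's proof: the paper also fixes the rebuild-side order $write(rbc,\alpha)\prec delete(old,\alpha)\prec insert(new,\alpha)\prec write(rbc,\perp)$ against the lookup-side order $find(old)\prec find(rbc)\prec find(new)$ and splits into the same three exhaustive cases (found in old, found in new, or caught in the hazard period via \emph{rebuild\_cur}), which your five-state monotone machine merely re-packages. The only difference is presentational: you explicitly justify the monotonicity of observed states via the \emph{smp\_wmb}/\emph{smp\_rmb} pairing and flag the node-reuse subtlety, whereas the paper assumes a global total order on events and leaves those points to the surrounding prose.
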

\begin{proof}
    Obviously, if rebuild operation is absent,
    node $\alpha$ resides in the only hash table.
    Operation \emph{ht\_lookup(\vn{K})} can find the node
    in the only hash table 
    (lines \ref{alg.lookup.bkt1} - \ref{alg.lookup.find1end}).

    We then prove that \emph{ht\_lookup(\vn{K})} can find the node
    when a rebuild operation is in progress.
    The code snippet to distribute a node is
    shown on lines \ref{alg.rebuild.cur1} - \ref{alg.rebuild.cur2}.
    We use $write_{rebuild}(rebuild\_cur, \alpha)$ to denote the event
    in which the thread running the rebuild operation 
    (henceforth rebuild thread for short)    
    assigns the address of node $\alpha$
    to the global variable \emph{rebuild\_cur} (line \ref{alg.rebuild.cur1}),
    and use $delete_{rebuild}(old, \alpha)$ and $insert_{rebuild}(new, \alpha)$
    to denote the events
    in which node $\alpha$ is deleted from and inserted into
    the old and the new hash table, respectively
    (lines \ref{alg.rebuild.del} and \ref{alg.rebuild.insert}).
    Similarly,
    we use $find_{lookup}(old, \alpha)$ and $find_{lookup}(new, \alpha)$
    to denote the events in which the \emph{lookup} thread
    searches for node $\alpha$ in the old and the new hash table, respectively
    (lines \ref{alg.lookup.find1} and \ref{alg.lookup.find2}).
    We use $find_{lookup}(rebuild\_cur, \alpha)$ to denote the event
    in which the lookup thread checks the node pointed to by \emph{rebuild\_cur} (line \ref{alg.lookup.global}). 
    In the following proof,
    since the rebuild thread is the only thread that performs
    write/delete/insert operations,
    and the lookup thread is the only thread that performs
    find operations, 
    we omit thread symbol without introducing any ambiguity.
    For brevity, we use the acronym \emph{rbc} to stand for \emph{rebuild\_cur}.
    One event $e_1$ precedes another event $e_2$, 
    written $e_1 \prec e_2$ , 
    if $e_1$ occurs at an earlier time.

    By inspecting the code of \emph{ht\_rebuild} in Algorithm \ref{alg.rebuild}
    we get that:
\begin{equation}
    \footnotesize
    write(rbc, \alpha) \prec delete(old, \alpha) \prec insert (new, \alpha) 
        \prec write(rbc, \perp)
    \label{equa.rebuild}
\end{equation}

    By inspecting the code of \emph{ht\_lookup} in Algorithm \ref{alg.lookup},
    we get that:
\begin{equation}
    \footnotesize
    find(old, \alpha) \prec find(rbc, \alpha) \prec find(new, \alpha)
    \label{equa.lookup}
\end{equation}

When the lookup and the rebuild thread
is simultaneously accessing node $\alpha$,
there are three types of interleaving
between these two threads:

\begin{itemize}[leftmargin=*]
    \item
    {\footnotesize $find(old, \alpha) \prec delete(old, \alpha)$},
    which implies that the lookup thread searches for node $\alpha$
    before the rebuild thread starts distributing the node.
    Thus, the node can be found in the old hash table and
    the lookup operation
    can return a pointer to $\alpha$ on line \ref{alg.lookup.return1}.

    \item 
    {\footnotesize $insert(new, \alpha) \prec find(new, \alpha)$},
    which implies that the lookup thread searches for node $\alpha$
    after it has been inserted into the new hash table
    by the rebuild thread. 
    Thus, the node can be found in the new table and
    the lookup operation
    can return a pointer to $\alpha$ on line \ref{alg.lookup.return3}.

    \item
    {\footnotesize $delete(old, \alpha) \prec find(old, \alpha) \prec \dots
        \prec find(new, \alpha) \prec insert(new, \alpha)$},
    which implies that the \emph{lookup} thread searches for node $\alpha$
    which is in \emph{hazard period}.
    Combined with Equations \ref{equa.rebuild} and \ref{equa.lookup},
    we get the following event sequence:

    {\footnotesize $write(rbc, \alpha) \prec delete(old, \alpha) \prec find(old, \alpha) \prec
        find(rbc, \alpha) \prec \\ find(new, \alpha) \prec insert(new, \alpha)
        \prec write(rbc, \perp)$}

    It follows that:

    {\footnotesize $write(rbc, \alpha) \prec find(rbc, \alpha)
        \prec write(rbc, \perp)$}

    Once \emph{rbc(rebuild\_cur)} is set to point to node $\alpha$ it remains.
    Hence the \emph{lookup} thread can find node $\alpha$ via \emph{rebuild\_cur} 
    and can return a pointer to it on line \ref{alg.lookup.return22}.
\end{itemize}

In overall, if there is a node with the matching key in \sns,
it is guaranteed that the \emph{ht\_lookup} operation can
find the node and return a pointer to it,
no matter if a rebuild operation is in progress.
\end{proof}



\subsection{Delete operation}\label{sec.arch.impl.delete}

The delete operation of \sn is shown in Algorithm \ref{alg.delete}.
The function first attempts to delete the node from the old hash table
on line \ref{alg.delete.del1},
and returns \emph{SUCCESS} if succeeds on line \ref{alg.delete.return1}.
Otherwise,
the function continues by checking if a rebuild operation is in progress
on line \ref{alg.delete.checknew}.
The two read barriers on lines \ref{alg.delete.mb1} and \ref{alg.delete.mb2}
pair with the two write barriers in \emph{ht\_rebuild}.
If a rebuild operation is in progress,
the delete operation checks if the node
pointed to by \emph{rebuild\_cur} has the expected key value
on line \ref{alg.delete.checkglobal},
and if the answer is yes, the rebuild operation deletes the node
by setting the \emph{LOGICALLY\_REMOVED} bit of the \emph{next} field of the node
(Line \ref{alg.delete.deleteglobal}).
Function \emph{delete()} continues by attempting to delete the node
with the matching key from the new hash table (Line \ref{alg.delete.del2}).
If the delete operation fails,
line \ref{alg.delete.return4} returns \emph{-ENOENT} indicating that
no node with the matching key can be found in \sns.

\begin{algorithm2e}[ht]
    \footnotesize
    \SetStartEndCondition{ (}{)}{)}\SetAlgoBlockMarkers{}{\}}%
    \SetKwProg{Fn}{}{ \{}{}\SetKwFunction{FRecurs}{void FnRecursive}%
    \SetKwFor{For}{for}{ \{}{}%
    \SetKwIF{If}{ElseIf}{Else}{if}{ \{}{elif}{else \{}{}%
    \SetKwFor{While}{while}{ \{}{}%
    \SetKwRepeat{Repeat}{repeat\{}{until}%
    \AlgoDisplayBlockMarkers%
    
    \SetKwInOut{Input}{Input}
    \SetKwInOut{Output}{Output}
    \SetKwInOut{Parameters}{Parameters}
    \SetKwInOut{Variables}{Variables}
    
    \textbf{Local variables}: struct node *cur, *htbp, *htbp\_new\;
    \Fn{int ht\_delete(ht *htp, long key)} {

        htbp = htp->bkts[htp->hash(key)]\label{alg.delete.norebuild}\;
        \If{lflist\_delete(htbp, key, LOGICALLY\_REMOVED) = SUCCESS\label{alg.delete.del1}} {
            \textbf{return} SUCCESS\label{alg.delete.return1}\;
        }\label{alg.delete.norebuild2}
        htp\_new := htp->ht\_new\;
        \textbf{if} (\textit{htp\_new = NULL}) \label{alg.delete.checknew}
            \textbf{\{} \textbf{return} -ENOENT\label{alg.delete.return2}\label{alg.delete.checknewend} \textbf{\}}\;

        \textbf{smp\_rmb()}\label{alg.delete.mb1}\;

        cur := rebuild\_cur\label{alg.delete.assignrebuild}\;
        \If{cur and (cur->key = key)\label{alg.delete.checkglobal}}{
	    \emph{set\_flag}(cur, LOGICALLY\_REMOVED)\label{alg.delete.deleteglobal}\; 
            \textbf{return} SUCCESS\label{alg.delete.returnX}\;
	    }\label{alg.delete.checkglobale}

        \textbf{smp\_rmb()}\label{alg.delete.mb2}\;

        htbp\_new := htp\_new->bkts[htp\_new->hash(key)]\;
        \If{lflist\_delete(htbp\_new, key, LOGICALLY\_REMOVED) = SUCCESS\label{alg.delete.del2}}{
            \textbf{return} SUCCESS\label{alg.delete.return3}\;
        }
        \textbf{return} -ENOENT\label{alg.delete.return4}\;
        
    }
    
    \caption{Delete operation of \sns.}
    \label{alg.delete}
\end{algorithm2e}

To delete a node, \sn adopts a classic lightweight mechanism presented in \cite{Michael2002Hashing},
by separating the deletion of a node into two stages:
\emph{logical} and \emph{physical} deletion.
The first stage is to mark a node 
(e.g., by setting the least-significant bits in the \emph{next} field)
to prevent subsequent lookup operations from returning this node,
and prevent subsequent insert and delete operations from inserting and deleting nodes
after this node.
The second stage, which is typically performed by subsequent lookup operations,
is to physically delete the node from the list by 
swinging the next pointer of the previous node to the next node in the list
and then reclaiming the node memory.

Since a delete operation fundamentally performs lookup operations
in addition to a \emph{logical deletion} if the node with the matching key is found,
it is straightforward to prove that if the manipulation order of a delete operation
is the same as that of a lookup operation shown in Algorithm \ref{alg.lookup},
the delete operation can always find the node (because of Lemma \ref{lemma.lookup})
and delete the node (a logical deletion can always succeed).
That is, the following lemma holds:
\begin{lemma}
    If \sn contains node $\alpha$ with the key value of \vn{K},
    operation ht\_delete(\vn{K}) can successfully delete node $\alpha$,
    no matter if a rebuild operation is in progress.
    \label{lemma.deletion}
\end{lemma}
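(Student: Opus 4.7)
The plan is to mirror the structure of the proof of Lemma \ref{lemma.lookup}, exploiting the fact that \emph{ht\_delete} performs exactly the same three-stage search (old table, then \emph{rebuild\_cur}, then new table) as \emph{ht\_lookup}, but substitutes a \emph{logical deletion} for the pointer return at each stage. So the first step will be to observe that the event sequence imposed by Algorithm \ref{alg.delete} satisfies
\begin{equation}
\footnotesize
find(old,\alpha)\prec find(rbc,\alpha)\prec find(new,\alpha),
\label{equa.delete}
\end{equation}
exactly as in Equation~\ref{equa.lookup}, and that the rebuild thread's event order from Equation~\ref{equa.rebuild} still holds.

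Next, I would do a case split on the interleaving of the delete thread with the rebuild thread, reusing the same three cases enumerated in the proof of Lemma~\ref{lemma.lookup}. In the first case, $find(old,\alpha)\prec delete_{rebuild}(old,\alpha)$, the call to \textit{lflist\_delete} at line~\ref{alg.delete.del1} acquires $\alpha$ in the old table and logically removes it, returning on line~\ref{alg.delete.return1}. In the second case, $insert_{rebuild}(new,\alpha)\prec find(new,\alpha)$, the call at line~\ref{alg.delete.del2} succeeds and returns on line~\ref{alg.delete.return3}. In the third, hazardous case, the same argument as in Lemma~\ref{lemma.lookup} shows $write(rbc,\alpha)\prec find(rbc,\alpha)\prec write(rbc,\bot)$, so at line~\ref{alg.delete.checkglobal} the delete thread observes \emph{rebuild\_cur} pointing to $\alpha$, and line~\ref{alg.delete.deleteglobal} sets the \emph{LOGICALLY\_REMOVED} bit before returning on line~\ref{alg.delete.returnX}.

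The one point that deserves more care than in Lemma~\ref{lemma.lookup} is that a delete must actually \emph{succeed}, not merely locate the node. For the first and second cases this reduces to the standard argument for Michael's lock-free list: a logical deletion is a single atomic \textit{compare-and-swap} on the \emph{next} field and can only fail if the node has already been marked, in which case another delete has already removed $\alpha$ and the operation is vacuously correct. For the hazard-period case, \emph{set\_flag} at line~\ref{alg.delete.deleteglobal} is an atomic \textit{fetch\_or}, so it always succeeds in setting the \emph{LOGICALLY\_REMOVED} bit on $\alpha$; I need to argue that this bit is respected by the subsequent insertion into the new table performed by the rebuild thread.

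That last point is the main obstacle I anticipate: after the delete thread marks $\alpha$ via \emph{rebuild\_cur}, the rebuild thread may still be about to execute \emph{lflist\_insert} at line~\ref{alg.rebuild.insert}. I will need to appeal to the interaction between \emph{prepare\_node} (which could otherwise clear flags) and the fact that the \emph{LOGICALLY\_REMOVED} flag, once set on the live node $\alpha$, causes subsequent lookups in the new table to skip $\alpha$ and causes the logical-deletion contract to be honored; alternatively, I will argue that the rebuild thread's insertion is semantically equivalent to reinserting $\alpha$ into the new table where it is immediately eligible for physical removal by the next lookup touching that bucket, so the node is effectively absent from the hash table after \emph{ht\_delete} returns, which is all the lemma requires.
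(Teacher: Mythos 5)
Your proposal follows essentially the route the paper itself takes: the paper in fact \emph{omits} the proof of Lemma~\ref{lemma.deletion}, stating only that a delete is a lookup (so the three-stage search order and the case analysis of Lemma~\ref{lemma.lookup} carry over) followed by a logical deletion, and that ``a logical deletion can always succeed.'' Your first two paragraphs reproduce exactly that argument, so on that score you are aligned with the paper.

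Where you go beyond the paper is in your last paragraph, and you are right to flag it: the claim that the logical deletion always ``sticks'' in the hazard-period case is the one step the paper's one-sentence justification does not actually cover. After the delete thread reads \emph{rebuild\_cur} and before (or while) it executes \emph{set\_flag} on line~\ref{alg.delete.deleteglobal}, the rebuild thread may run \emph{prepare\_node} (line~\ref{alg.rebuild.prepare}) and \emph{lflist\_insert} (line~\ref{alg.rebuild.insert}) on the same node. If \emph{prepare\_node} clears only the \emph{IS\_BEING\_DISTRIBUTED} bit via an atomic \emph{fetch\_and}, the two read-modify-writes serialize and the \emph{LOGICALLY\_REMOVED} bit survives; but \emph{lflist\_insert} in Michael-style lists assigns the node's \emph{next} field wholesale when linking it into the new bucket, which can silently overwrite a concurrently set \emph{LOGICALLY\_REMOVED} bit and thereby lose the deletion. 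Your proposal names this obstacle but only sketches two candidate resolutions without committing to one, so as written the proof is incomplete precisely at the point where it would need to improve on the paper. To close it you would have to either (a) show that \emph{lflist\_insert} preserves the flag bits of the inserted node (e.g., by OR-ing the successor pointer into the existing flags, or by CAS-ing the \emph{next} field from its observed value), or (b) restructure the argument so that the linearization point of the hazard-case delete is the atomic \emph{fetch\_or} and prove that any subsequent state in which $\alpha$ appears unmarked in the new table is unreachable. Neither is established by the paper, so you should treat this as a genuine gap in both your proof and the paper's, not as a detail you may wave away.
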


Since the proof is similar to that of Lemma \ref{lemma.lookup},
due to lack of space, we omit the proof of Lemma \ref{lemma.deletion}
in this paper. 



\subsection{Insert operation}\label{sec.arch.impl.insert}

Function \emph{ht\_insert()} in Algorithm \ref{alg.insert}
inserts a new node into \sns.
The function first allocates a new node and initializes it
(line \ref{alg.insert.alloc})
and then checks if a rebuild operation is in progress
on line \ref{alg.insert.checkglobal}.

\begin{algorithm2e}[ht]
    \footnotesize
    \SetStartEndCondition{ (}{)}{)}\SetAlgoBlockMarkers{}{\}}%
    \SetKwProg{Fn}{}{ \{}{}\SetKwFunction{FRecurs}{void FnRecursive}%
    \SetKwFor{For}{for}{ \{}{}%
    \SetKwIF{If}{ElseIf}{Else}{if}{ \{}{elif}{else \{}{}%
    \SetKwFor{While}{while}{ \{}{}%
    \SetKwRepeat{Repeat}{repeat\{}{until}%
    \AlgoDisplayBlockMarkers%
    
    \SetKwInOut{Input}{Input}
    \SetKwInOut{Output}{Output}
    \SetKwInOut{Parameters}{Parameters}
    \SetKwInOut{Variables}{Variables}
    
    \textbf{Local variables}: struct node *htnp, *htbp, *htbp\_new;  struct ht *htp\_new\;
    \Fn{int ht\_insert(ht *htp, long key)} {
        htnp := allocate\_node(key)\label{alg.insert.alloc}\;
        htp\_new := htp->ht\_new\;
        
        \eIf{htp\_new = NULL\label{alg.insert.checkglobal}}{
        htbp := htp->bkts[htp->hash(key)]\label{alg.insert.common.b}\;
            \textbf{if} (\textit{lflist\_insert(htbp, htnp)}) \label{alg.insert.insertold}
            \textbf{\{} \textbf{return} SUCCESS\label{alg.insert.return1}\label{alg.insert.common.e} \textbf{\}}\;
	    } {\label{alg.insert.newtable.b}
            htbp\_new := htp\_new->bkts[htp\_new->hash(key)]\;
            \textbf{if} ( \textit{lflist\_insert(htbp\_new, htnp) }) \label{alg.insert.insertnew}
                \textbf{\{} \textbf{return} SUCCESS\label{alg.insert.return2} \textbf{\}}\;

	}\label{alg.insert.newtable.e}

        free(htnp)\label{alg.insert.free}\;
        \textbf{return} -EEXIST\label{alg.insert.return3}\;
    }
    
    \caption{Insert operation of \sns.}
    \label{alg.insert}
\end{algorithm2e}

If there is no rebuild operation in progress,
function \emph{ht\_insert} inserts the new node in the old hash table 
on line \ref{alg.insert.insertold}.
In contrast, if a rebuild operation is in progress,
it inserts the node in the new hash table on line \ref{alg.insert.insertnew}.
If any insertion fails, which implies that 
a node with the same key value has been inserted into \sn
before this insert operation is performed.
Function \emph{ht\_insert} frees the newly allocated node on line \ref{alg.insert.free}
and returns a failure message on line \ref{alg.insert.return3}.
Since the RCU technique is used to synchronize insert operations and rebuild operation,
the following lemma holds:
\begin{lemma}
    When a rebuild operation is in progress,
    function \emph{ht\_insert} can successfully insert node $\alpha$ into \sns.
    \label{lemma.insert.helper}
\end{lemma}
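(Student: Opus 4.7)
The plan is to reduce the lemma to two things: (i) that the new hash table remains a well-formed, live object for the entire duration of \emph{ht\_insert}, and (ii) that the node placed into it by the insertion is never removed or lost by the concurrent rebuild. Both follow from the RCU discipline and from the one-way nature of the rebuild's traversal, so the argument is mostly an inspection of the code rather than a new algorithmic insight.

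First I would observe that \emph{ht\_insert} is always invoked inside an RCU read-side critical section, and that on line \ref{alg.insert.checkglobal} it reads \emph{htp\_new} from the \emph{ht\_new} field of \emph{htp}. Since this value is non-\emph{NULL} by hypothesis, the unique thread holding \emph{rebuild\_lock} has already executed the assignment on line \ref{alg.rebuild.assign1} but has not yet passed the final \vn{synchronize\_rcu()} on line \ref{alg.rebuild.sync3}; otherwise \emph{htp\_tmp} would already have been freed on line \ref{alg.rebuild.freeht}. By RCU semantics both \emph{htp} and \emph{htp\_new} therefore remain allocated throughout the current critical section, so the bucket array of \emph{htp\_new} and the underlying \emph{lflist} in every bucket are safe to dereference. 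Next I would invoke \emph{lflist\_insert}(\emph{htbp\_new}, \emph{htnp}) on line \ref{alg.insert.insertnew}. Because the underlying lock-free linked list is non-blocking, this call terminates in a bounded number of CAS retries and returns \emph{SUCCESS} whenever no node with key equal to that of $\alpha$ is already present in the target bucket---exactly as in the common-case insertion handled on line \ref{alg.insert.insertold}.

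The step I expect to require the most care, and the only one that genuinely uses the rebuild's structure, is showing that $\alpha$ is not silently discarded once the rebuild finishes its distribution loop and swaps hash tables on line \ref{alg.rebuild.assign2}. Two observations close the gap. First, the rebuild thread touches \emph{htp\_new} only through \emph{lflist\_insert} on line \ref{alg.rebuild.insert}; it never issues a delete against the new table. Hence once $\alpha$ is linked into a bucket of the new table it stays there. Second, the \vn{synchronize\_rcu()} on line \ref{alg.rebuild.sync2} acts as a barrier that the rebuild thread cannot cross---and in particular cannot perform the pointer swap---until every concurrent reader that observed the \emph{ht\_new} field as non-\emph{NULL} has exited its critical section. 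Consequently the insertion of $\alpha$ commits into a bucket that survives the swap, and after the swap the same bucket array is what subsequent \emph{ht\_lookup} and \emph{ht\_delete} calls walk, so Lemma \ref{lemma.lookup} (and its counterpart for delete) carries the node's visibility forward. Together these facts give the claim.
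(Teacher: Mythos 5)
There is a genuine gap. Your argument begins by assuming that the insert thread reads a non-\emph{NULL} value of \emph{ht\_new} on line \ref{alg.insert.checkglobal} ``by hypothesis,'' but the lemma's hypothesis is only that a rebuild is \emph{in progress}, not that the inserting thread observes it. An insert can be concurrent with the rebuild and still see \emph{ht\_new} as \emph{NULL} --- for instance if its read of \emph{ht\_new} races with, or precedes, the assignment on line \ref{alg.rebuild.assign1}. In that case the node is placed into the \emph{old} hash table on line \ref{alg.insert.insertold}, and the entire difficulty of the lemma is explaining why such a node is not lost when the old table is later freed on line \ref{alg.rebuild.freeht}. The paper's proof is organized precisely around this case split: if the insert's read-side critical section begins before the \vn{synchronize\_rcu()} on line \ref{alg.rebuild.sync1} (``barrier 1''), the insert may land in either table, and correctness of the old-table case rests on the fact that barrier 1 forces the rebuild thread to wait for that insert to finish \emph{before} it begins the distribution loop on line \ref{alg.rebuild.for1} --- so a node inserted into the old table is guaranteed to be swept up and redistributed. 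Your proof never mentions barrier 1 and never considers an insertion into the old table, so it silently excludes the scenario that the first \vn{synchronize\_rcu()} exists to handle.

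The portions you do argue --- that \emph{htp\_new} stays allocated for the duration of the reader's critical section, that the rebuild never deletes from the new table, and that the barrier on line \ref{alg.rebuild.sync2} delays the pointer swap until concurrent readers finish --- are correct and match the second half of the paper's argument (the case where the insert starts after barrier 1 and therefore targets the new table). But as written the proof establishes the conclusion only for that case, not for the lemma as stated. To close the gap you would need to add the complementary case and show why an insertion that lands in the old table during an in-progress rebuild is still eventually present in \sn.
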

\begin{proof}
    Recall that function \emph{ht\_insert()} is run in an RCU read-side critical section,
    and function \emph{ht\_rebuild()} performs a \emph{synchronize\_rcu} barrier on line \ref{alg.rebuild.sync1}
    (called \emph{barrier 1})
    before distributing nodes to 
    the new hash table.
    If the insert operation starts before \emph{barrier 1},
    it may or may not see the new hash table, 
    and hence could insert the node into either the old or the new hash table.
    Inserting the node into any of the hash tables is correct,
    because \emph{barrier 1} prevents function \emph{ht\_rebuild()} from starting distributing nodes
    until the insert operation completes and leaves its RCU read-side critical section.
    In the other case, if the insert operation starts after \emph{barrier 1},
    which means that function \emph{ht\_rebuild()} is distributing nodes,
    the insert operation will insert the node into the new hash table.
    A second \emph{synchronize\_rcu} barrier on line \ref{alg.rebuild.sync2}
    force the function \emph{ht\_rebuild()} to
    wait until the insert operation completes and leaves its RCU read-side critical section.
\end{proof}

\noindent Now, we prove that the following lemma holds: 

\begin{lemma}
    No matter if a rebuild operation is in progress,
    when an operation \emph{ht\_insert(K)} returns,
    it is guaranteed that a node with the key value of \vn{K}
    can be found in the hash table.
    \label{lemma.insert}
\end{lemma}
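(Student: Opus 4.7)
The plan is to split on whether a rebuild is in progress at the moment \emph{ht\_insert(K)} makes its branching decision on line \ref{alg.insert.checkglobal}, and then on whether the call returns SUCCESS or -EEXIST. In each case I will show that a node with key \vn{K} resides in a reachable location of \sns, and then invoke Lemma \ref{lemma.lookup} to conclude that \emph{ht\_lookup(K)} can return a pointer to such a node. Note that Lemma \ref{lemma.lookup} already handles the subtleties of where a lookup must probe (old table, \emph{rebuild\_cur}, new table), so my job is only to argue that one of those locations actually contains a matching node at the point \emph{ht\_insert} returns.

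First I would treat the easy branch: if \emph{htp\_new = NULL} on line \ref{alg.insert.checkglobal}, the insert targets the only live table. If \emph{lflist\_insert} on line \ref{alg.insert.insertold} succeeds, the freshly allocated node is linked into the appropriate bucket of \emph{htp}; if it fails (triggering the \emph{-EEXIST} return on line \ref{alg.insert.return3}), the lock-free list semantics of \emph{lflist\_insert} imply that a node already present in that bucket has key \vn{K}. Either way, \sn contains a node with key \vn{K} when \emph{ht\_insert} returns, and Lemma \ref{lemma.lookup} applies. Next I would handle the symmetric branch where \emph{htp\_new} is non-NULL: by Lemma \ref{lemma.insert.helper}, \emph{ht\_insert} will place the new node into the appropriate bucket of the new hash table on line \ref{alg.insert.insertnew}, or else \emph{lflist\_insert} fails because the new table already contains a node with key \vn{K}. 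In either situation, at the moment of return, a node with key \vn{K} sits in the new table.

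The subtle point, and the one I expect to be the main obstacle, is ensuring that the node with key \vn{K} remains reachable by a subsequent lookup even though the rebuild thread is concurrently deleting nodes from the old table, shuttling them through \emph{rebuild\_cur}, and possibly swinging \emph{htp} to the new table on line \ref{alg.rebuild.assign2}. I would address this by carefully aligning the insert with the RCU barriers on lines \ref{alg.rebuild.sync1}, \ref{alg.rebuild.sync2}, and \ref{alg.rebuild.sync3} of Algorithm \ref{alg.rebuild}: since \emph{ht\_insert} runs inside a single RCU read-side critical section, the rebuild cannot advance past \emph{barrier 1} while an insert that targets the old table is still in flight, and it cannot cross \emph{barrier 2} while an insert that targets the new table is still in flight. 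Consequently, the node is either (i) planted in the old table and subsequently redistributed by the rebuild (in which case Equation \ref{equa.rebuild} of Lemma \ref{lemma.lookup} covers every interleaving a later lookup can observe), or (ii) planted directly in the new table before the old table is freed.

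Finally I would close the argument by noting that the \emph{-EEXIST} case in the rebuild-in-progress branch is analogous: failure of \emph{lflist\_insert} in the new table implies a colliding node with key \vn{K} is already present in \sns\ (either inserted there earlier or migrated by the rebuild from the old table), so Lemma \ref{lemma.lookup} again applies. Putting the cases together yields the claim that upon return from \emph{ht\_insert(K)}, a node with key \vn{K} is reachable to \emph{ht\_lookup}.
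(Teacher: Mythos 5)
Your proposal is correct and follows essentially the same route as the paper's own proof: a case split on whether a rebuild is in progress, an appeal to Lemma \ref{lemma.insert.helper} (including its RCU-barrier argument) for the rebuild case, and the observation that an \emph{-EEXIST} failure implies a node with key \vn{K} is already present. The only difference is that you spell out the reachability step by explicitly invoking Lemma \ref{lemma.lookup}, which the paper leaves implicit; this is a harmless (arguably welcome) elaboration rather than a different approach.
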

\begin{proof}
    If a rebuild operation is absent, which is the common case,
    function \emph{ht\_insert} inserts the new node into the only hash table
    (Lines \ref{alg.insert.common.b}--\ref{alg.insert.common.e}).
    If a rebuild operation is in progress,
    Lemma \ref{lemma.insert.helper} guarantees that
    \emph{ht\_insert} will eventually insert the new node
    into the new hash table.
    Function \emph{ht\_insert} fails only if another node with the same key value 
    has been inserted into the hash table,
    which guarantees that a node with the key value of \vn{K} can be found.
\end{proof}

\section{Correctness}\label{sec.correcness}



For brevity, we provide only informal proof sketches.
The full proof of correctness will be provided in the full version of the paper.
Recall that \sn is modular.
Therefore, if a lock-free/wait-free set algorithm can provide the API set listed in Algorithm \ref{alg.structure.list},
\sn can utilize it as the implementation of hash buckets.
Therefore, the correctness of \sn depends on the set algorithm used.
In this paper, we choose the RCU-based linked list presented in Section \ref{sec.bb.list}
as the example.

\textbf{Safety}:
When rebuild operations are absent, safety is proved by following similar arguments
as those used to prove Michael's lock-free hash table \cite{Michael2002Hashing}.
When a rebuild operation is running, 
Lemmas \ref{lemma.lookup}, \ref{lemma.deletion}, and \ref{lemma.insert} show that
concurrent lookup, insert, and delete operations can execute correctly. 


\textbf{Progress guarantee}:
\sn is a blocking data structure because 
the RCU technique is used,
and therefore the rebuild operation can be blocked.
Specifically, \emph{ht\_rebuild} serializes concurrent rebuild requests by using a mutex lock
and waits for prior hash table operations by using the \emph{synchronize\_rcu} barriers.
This is acceptable for a practical implementation 
because rebuild operations commonly are rare and their speed is not the major concern
if they do not noticeably affect concurrent hash table operations.

Nevertheless, the lookup, insert, and delete operations
could be lock-free/wait-free, which is determined by the set algorithm used.
For example, Algorithm \ref{alg.lookup} shows that a lookup operation
invokes the list operation \emph{lflist\_find} twice.
Other statements in a lookup operation are regular instructions,
which can complete in a finite number of CPU cycles.
As a result, for the implementation of \sn presented in this paper,
its lookup operation is lock-free because searching a linked list is lock-free.
(The find operation of Michael's list may start over from the list head when they find a marked node.)
Similarly, we can prove that the insert and delete operations of \sn are lock-free.
(As discussed in Section \ref{sec.bb.list}, \emph{call\_rcu} is used in reclaiming deleted nodes,
such that delete operations will not block.)
Note that since \sn is modular,
programmers can instead use a wait-free linked list 
and make the common operations become wait-free.



\textbf{Linearizability}:
\sn is linearizable if the set algorithm used is linearizable.
The linked list presented in Section \ref{sec.bb.list} is linearizable,
because we did not change the control flow of Michael's list algorithm.
Specifically, we keep all the CAS instructions and memory barriers that the algorithm contains.
As a result, \sn is linearizable because every operation on the hash table
has a specific linearization point, where it takes effects.

Specifically,
every lookup operation that finds the node with the matching key
via \emph{rebuild\_cur}
takes effect on line \ref{alg.lookup.assignrebuild}.
For other cases, the lookup operation linearizes in either of the two invocations of \emph{lflist\_find}.
Similarly,
every delete operation that finds the node with the matching key
via \emph{rebuild\_cur}
takes effect on line \ref{alg.delete.assignrebuild}.
For other cases, the delete operation linearizes in either of the two invocations of \emph{lflist\_delete}.
Every insert operation takes effect in either of the two invocations of
\emph{lflist\_insert}.

\section{Evaluations}\label{sec.eva}

In this section, we demonstrate that on three different architectures
(1) the overall performance of \sn matches or slightly exceeds
other practical alternatives,
(2) \sn noticeably outperforms other alternatives under heave workloads,
and (3) the rebuild operation of \sn is efficient and predictable in execution time.


\subsection{Evaluation Methodology} \label{sec.eva.meth}
We choose Xu's hash table \cite{Herbert2010Hashing} (\emph{HT-Xu} for short)
as the representative of dynamic hash tables
that maintain two sets of list pointers in each node.
We choose the \emph{rhashtable} algorithm in the Linux kernel
\cite{rhashtable} (\emph{HT-RHT} for short)
as the representative of dynamic hash tables
that use a single set of list pointers. 
We also compare the performance of \sn to the famous split-ordered-list 
resizable hash tables \cite{Shalev2006SplitorderedLL} (\emph{HT-Split} for short)
that maintain a single ordered linked list for a hash table.

We implemented \sn as a user-space library in C.
The original implementation of \emph{HT-Xu} is tightly combined with
the multicasting code of the Linux kernel,
so we use the implementation in perfbook \cite{perfbook},
which is a good representative of \emph{HT-Xu} and run in user-space.
We implemented a user-space \emph{HT-RHT} that is strictly close to
the original kernel implementation except that
we omitted some sophisticated features such as Nested Tables to handle 
GFP\_ATOMIC memory allocation failures and 
Listed Tables to support duplicated nodes.
The open-source project \emph{userspace-rcu} \cite{userspace-rcu}
includes a up-to-date implementation of \emph{HT-Split}.
Hence, we use the implementation in \emph{userspace-rcu} in experiments.

For all of the implementations, optimizations such as cache-line padding are applied
if possible.
We compile the code with GCC 5.4.0 on all platforms where Ubuntu 16.04.5 is installed.
We use -O3 as our optimization level without any special optimization flags.

\textbf{Hardware platforms }
We evaluated the performance of aforementioned hash tables
on three different architectures.
Table \ref{table.platforms} lists the key characteristics of these servers.

\begin{table}[ht]
        \centering
        \footnotesize
        \caption{Summary of experimental platforms.}
        \label{table.platforms}
        \begin{tabular}{lccccr}
                \hline\hline\noalign{\smallskip}
		Processor Model & Speed & \#Sockets & \#Cores & LLC & Memory\\
                \noalign{\smallskip}
                \hline
		Intel Ivy Bridge & 2.6 G & 2 & 24 & 15 M & 64 G\\
		IBM Power9 & 2.9 G & 1 & 16 & 80 M & 16 G\\
		Cavium ARMv8 & 2.0 G & 2 & 96 & 16 M & 32 G\\
                \hline
        \end{tabular}
\end{table}

\textbf{Benchmarking framework}
To compare the performance and robustness of \sn to the alternatives,
we extended the \emph{hashtorture} benchmarking framework in \cite{perfbook}.
Specifically, the extended framework consists of a specified number of
\emph{worker threads}, each of which performs the workload with
the specified distribution of \emph{insert}, \emph{delete}, and \emph{lookup} operations
specified by parameter $m$.
In mapping worker threads to CPU cores,
we use a \emph{performance-first} mapping;
a new thread is mapped to the CPU core
that has the smallest number of worker threads running on it.
Experiments performed on a single CPU socket are marked with an \emph{*}, 
experiments performed on multiple CPU sockets are marked with a \emph{\#}, 
and experiments in which worker threads oversubscribe CPU cores are marked with a \emph{!}.
In experiments,
we varied parameters that significantly affects the performance of concurrent hash tables:
the mix of operations $m$, the average load factor $\alpha$, 
the number of buckets $\beta$,  and the range of keys $U$.
When a test starts,
every worker thread performs an infinite loop.
In each iteration, the worker thread randomly selects an operation type
(\emph{insert}, \emph{delete}, or \emph{lookup}) according to the specified distribution $m$,
chooses a key from 0 to the specified upper bound $U$,
and then performs the specified operation.
We chose parameters as follows:
$U$ is set to ten million
that is large enough to prevent CPU caches from buffering the whole test set.
We controlled the average load factor $\alpha$
indirectly by inserting $\alpha*\beta$ nodes in a hash table
before starting a test, and by selecting the ratio of \emph{insert}
to be equal to that of \emph{delete},
which guarantees a fixed number of nodes in the hash table.



\begin{figure}[h]
    \centering
    \subfloat[90\% lookup. Load factor is set to 2.]{
        \includegraphics[scale=0.17, angle=-90]{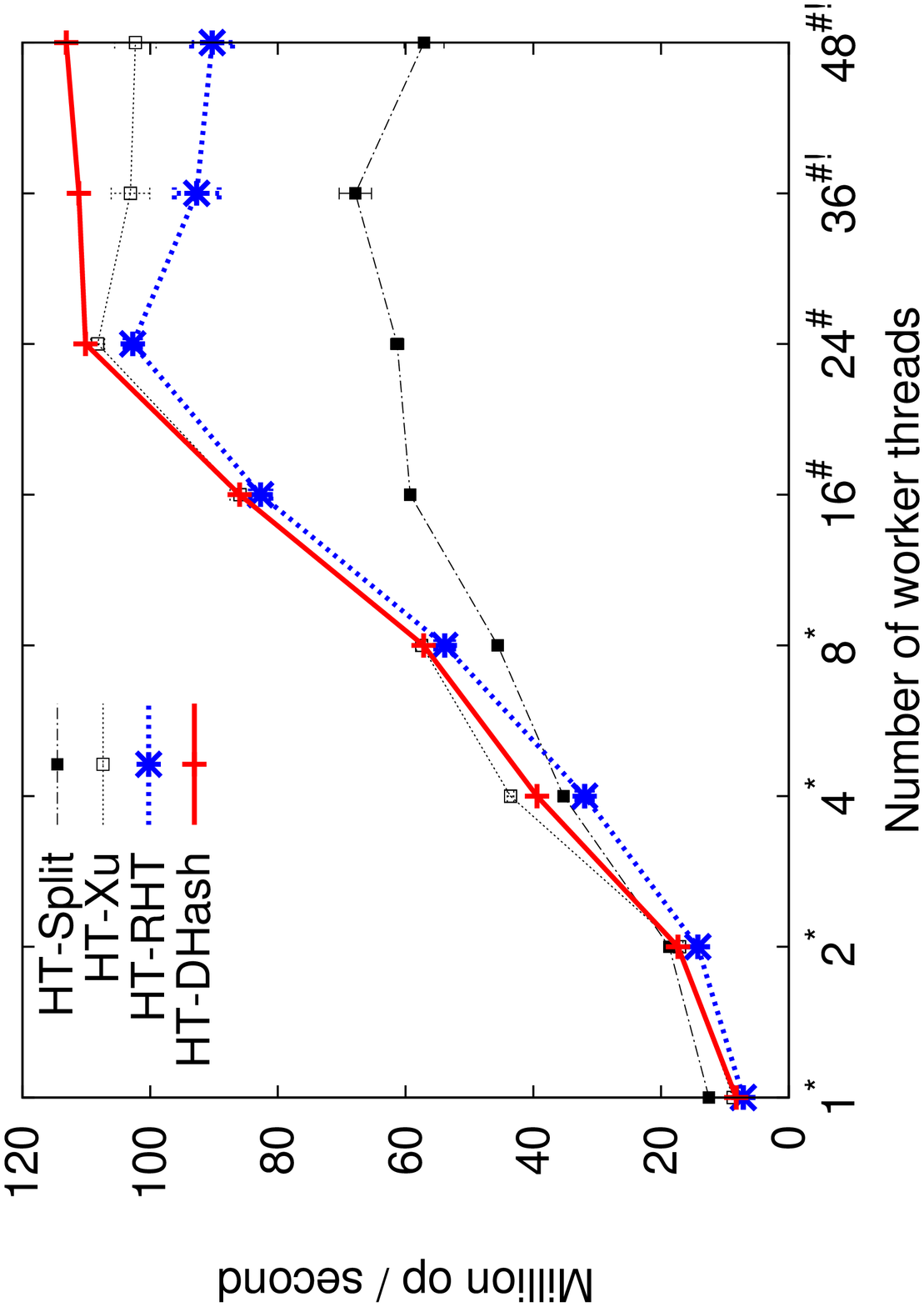}
        \label{fig.perf.LD2.5}
    }
    \subfloat[80\% lookup. Load factor is set to 2.]{
        \includegraphics[scale=0.17, angle=-90]{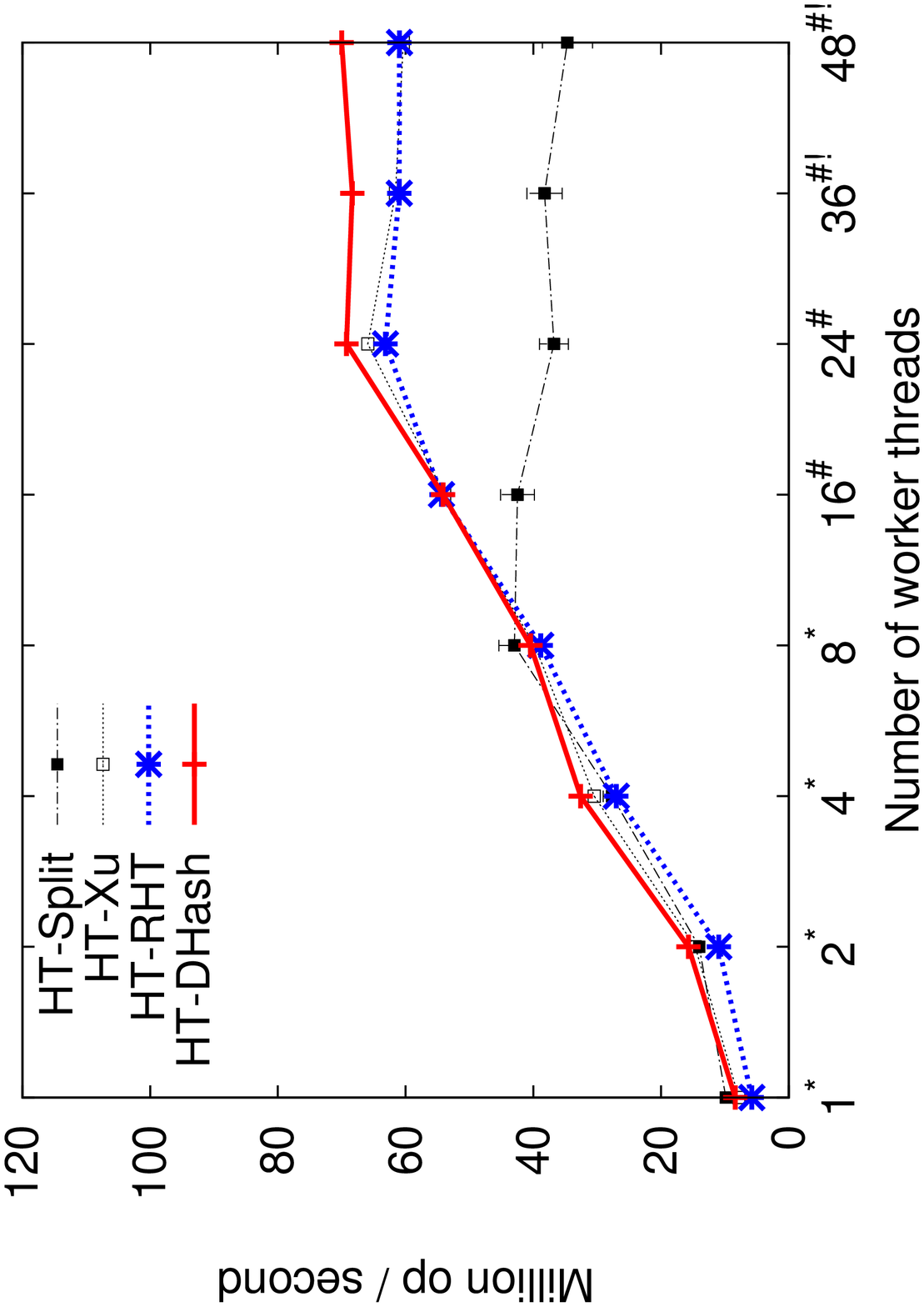}
        \label{fig.perf.LD2.10}
    }

    \subfloat[90\% lookup. Load factor is set to 20.]{
    \includegraphics[scale=0.17, angle=-90]{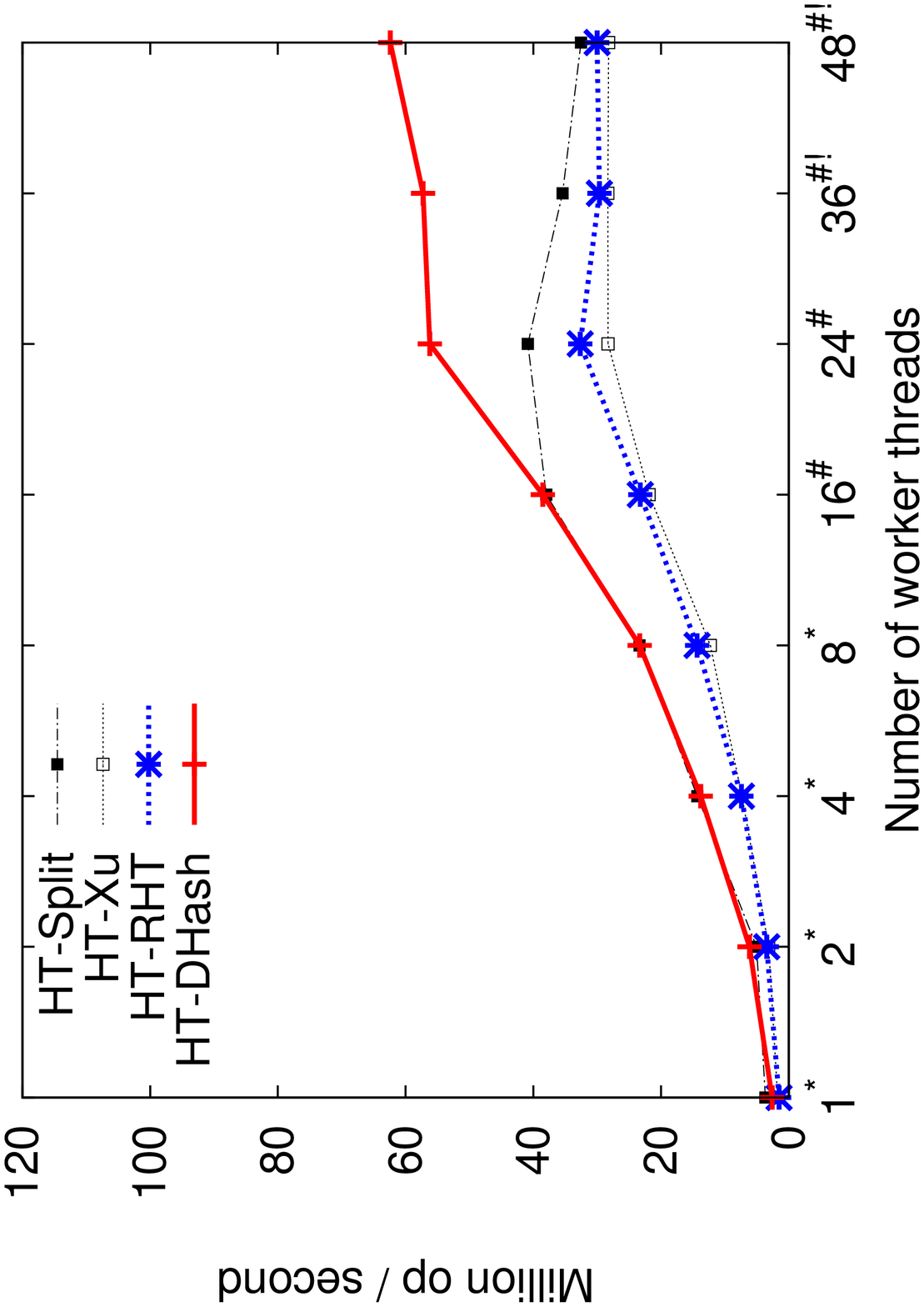}
    \label{fig.perf.LD20.5}
    }
    \subfloat[80\% lookup. Load factor is set to 20.]{
    \includegraphics[scale=0.17, angle=-90]{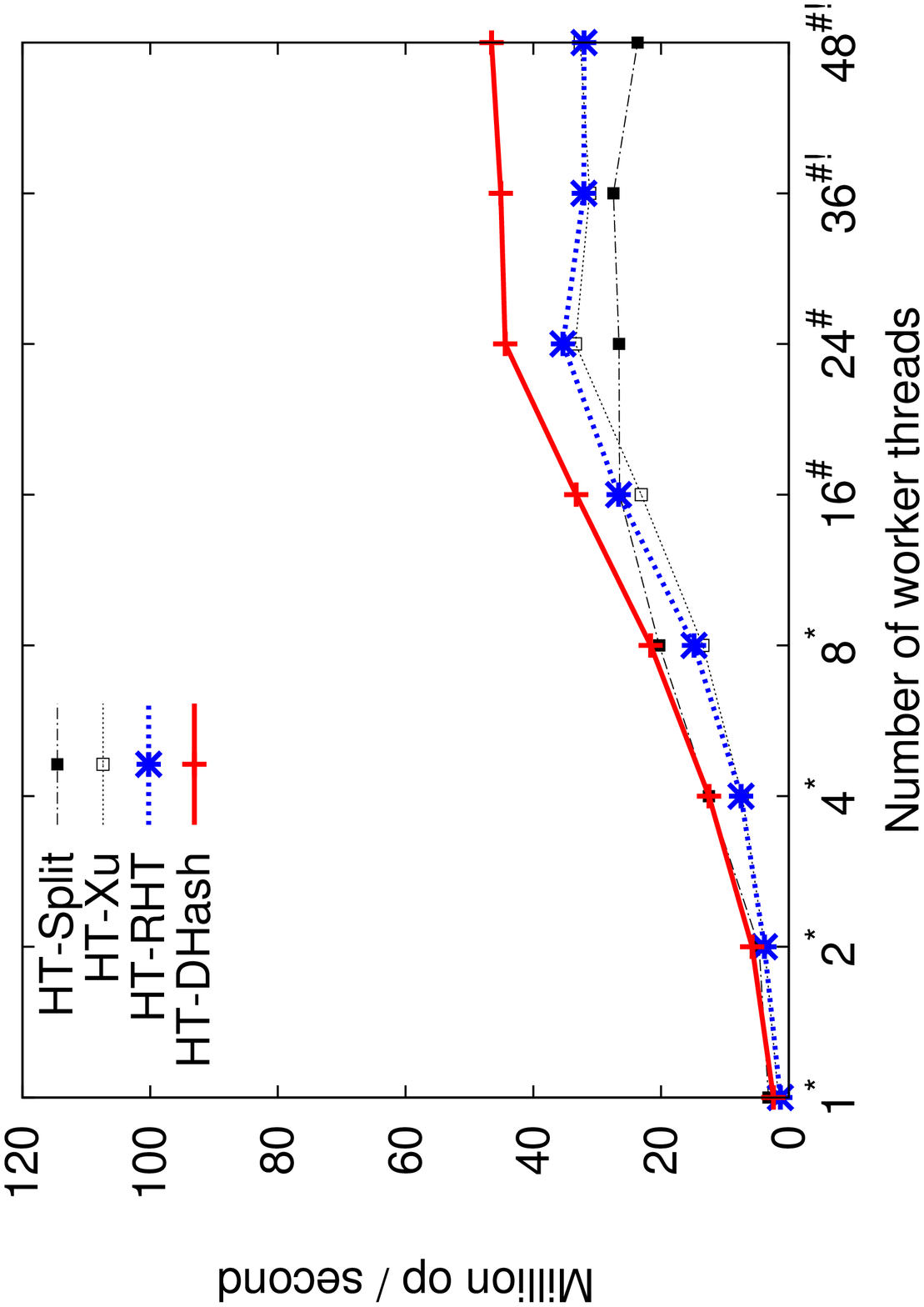}
    \label{fig.perf.LD20.10}
    }

    \subfloat[90\% lookup. Load factor is set to 200.]{
    \includegraphics[scale=0.17, angle=-90]{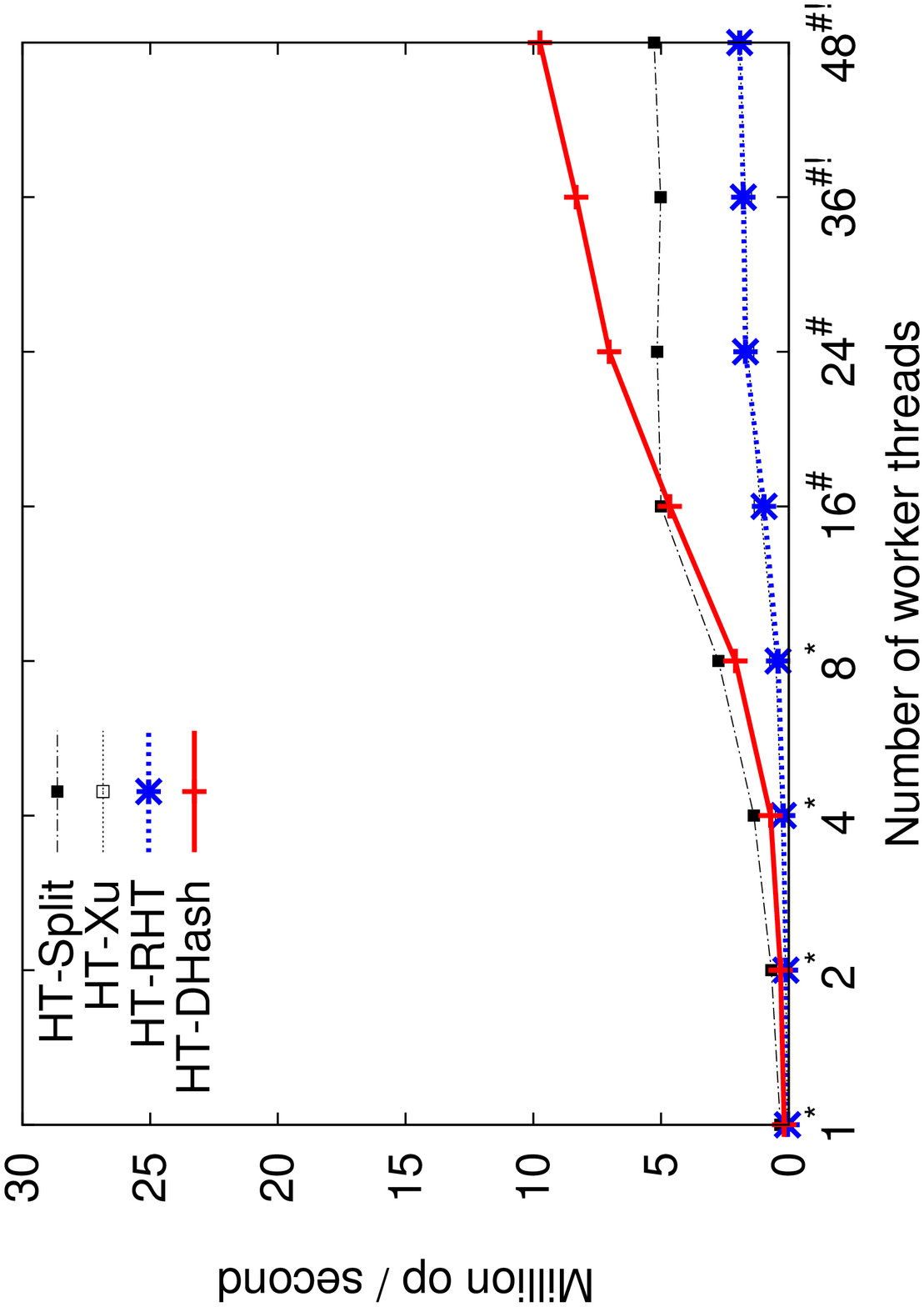}
    \label{fig.perf.LD200.5}
    }
    \subfloat[80\% lookup. Load factor is set to 200.]{
    \includegraphics[scale=0.17, angle=-90]{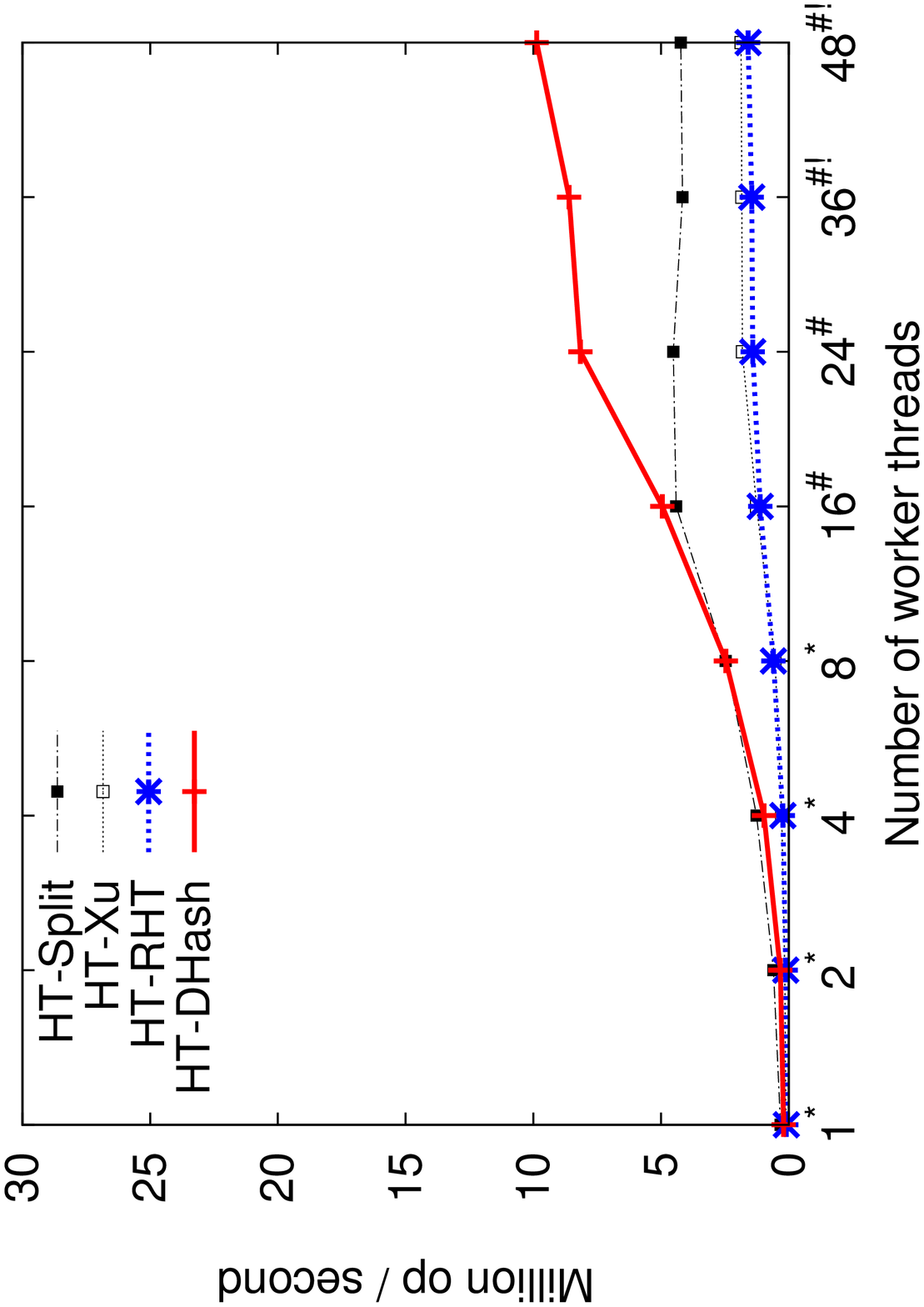}
    \label{fig.perf.LD200.10}
    }
    \caption{Performance of hash tables on Intel Ivy Bridge.}
    \label{fig.perf}
\end{figure}

\subsection{Overall performance}\label{sec.eva.perf}

Figures \ref{fig.perf} shows
the overall performance of the hash tables with various load factors and operation mixes.
We only present in this section the results for representative experiments
performed on the Intel Ivy Bridge server.
Experimental results on other architectures are discussed in Section
\ref{sec.eva.arch}.
Note that for clarity, the range of the y-axis in the last two figures 
is smaller than in other Figures.
Standard deviation is denoted by vertical bars,
which may be too small to be visible in the figure.
To compare with \emph{HT-Split}, in this experiment,
both the new and the old hash tables use the same hash function,
degrading \sns, \emph{HT-RHT}, and \emph{HT-Xu} to resizable hash tables.
A rebuild thread continuously rebuilds a hash table from its initial size to
the alternative size and back.
While continuos resizes do not necessary reflect a common usage pattern for a hash table,
this experiment noticeably demonstrate the overall performance of a hash table
under rebuilding/resizing, demonstrating the baseline performance of the hash table.

Experimental results show that 
(1) the overall performance of \sn matches or slightly exceeds other practical alternatives
with small average load factors (Figures \ref{fig.perf.LD2.5}--\ref{fig.perf.LD2.10}),
and (2) \sn significant outperforms other hash tables
under heavy workloads (Figures \ref{fig.perf.LD200.5}--\ref{fig.perf.LD200.10}).
For example,
Figure \ref{fig.perf.LD200.10} shows that
when 48 worker threads concurrently execute operations,
\sn can still handle 9.87 million operations per second,
which outperforms \emph{HT-Split}, \emph{HT-Xu}, and \emph{HT-RHT}
by factors of 2.3, 5.3, and 6.2, respectively.

Another important observation is that \sn outperforms other alternatives
with respect to scalability and robustness.
Figures \ref{fig.perf} shows that
when the number of worker threads exceed the number of CPU cores (24 for the Intel Ivy Bridge server),
the performance of  \sn increases slightly
despite the fact that more threads is contending the hash table.
For example, Figure \ref{fig.perf.LD200.5} shows that
as the number of worker threads increases from 24 to 48,
the overall performance of \sn increases from 7.03 to 9.74 million operations per second.
The performance of other alternatives, however, becomes flat or decreases
due to the increased contention on bucket locks.


\subsection{Rebuilding efficiency}\label{sec.eva.rebuild}

In this section, we measure how fast various rebuild operations can rebuild.
For brevity, Figure \ref{fig.eva.rebuild} only shows the results of representative experiments
running on the Intel Ivy Bridge server, and with one worker thread.
The x-axis of the figure is the amount of nodes in hash tables,
and the y-axis the time spent in rebuilding these hash tables.
Note that for clarity, both axes do not start from zero,
and the y-axis is shown as log scale.
The results of experiments with
different percentages of lookup operations (90\% and 80\%) are shown in Figures
\ref{fig.rebuild.LD.5.5.90} and \ref{fig.rebuild.LD.33.33.34},
respectively.

We make the following observation.
As expected, the cost of the resize operation of \emph{HT-Split} is consistently low,
because of the fact that \emph{HT-Split} is a resizable hash table
and hence it only changes the array of bucket pointers
when resizing.
The rebuild operation of \emph{HT-Xu} is much more efficient
compared with \sn and \emph{HT-RHT} because of its two-sets-of-pointers property,
which allows a rebuild operation to rebuild the hash table by
traversing the hash table once.
For other dynamic hash tables, which need to distribute all nodes to the new hash table,
the time required is basically linear to the amount of nodes in the old hash table.
For this type of hash tables,
\sn outperforms \emph{HT-RHT} in rebuilding efficiency because
\emph{HT-RHT} always traverses a bucket list and then distributes the last node.
In contrast, \sn distributes the head nodes, avoiding the traversing overheads.

\begin{figure}[!h]
    \centering
    \subfloat[5\% Insert, 5\% Delete]{
        \includegraphics[scale=0.16, angle=-90]{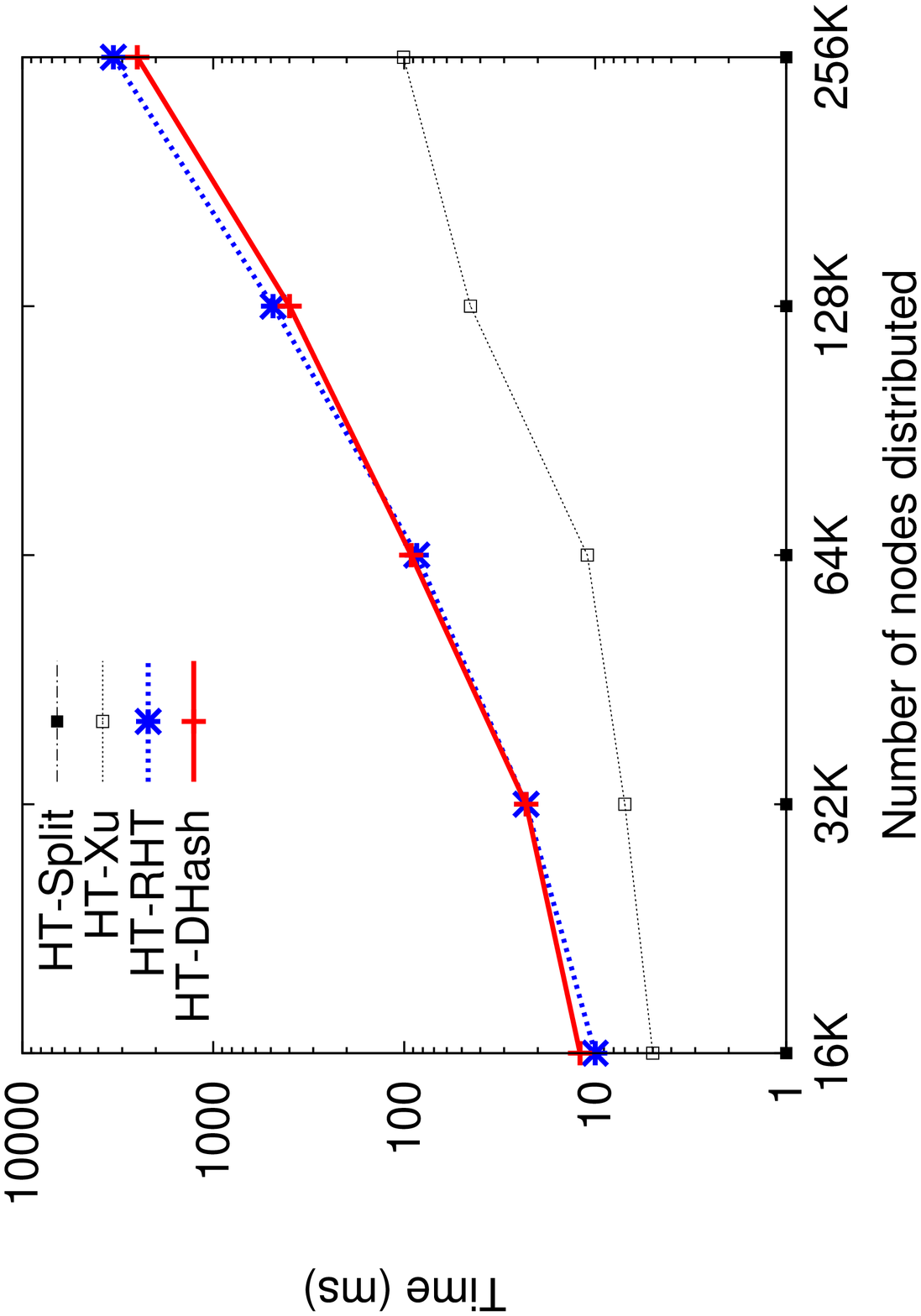}
        \label{fig.rebuild.LD.5.5.90}
    }
    \subfloat[$1/3$ Insert, $1/3$ Delete]{
        \includegraphics[scale=0.16, angle=-90]{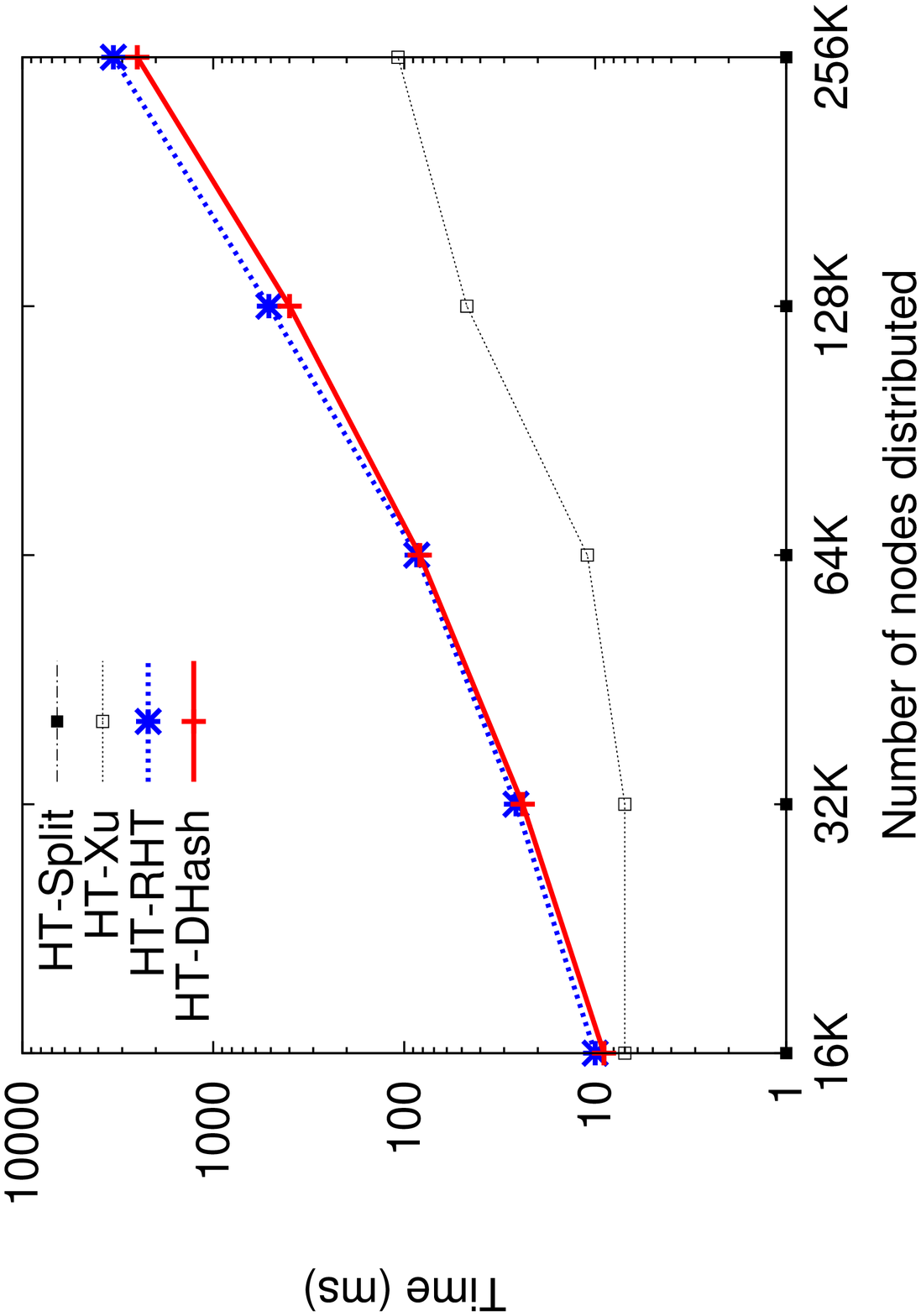}
        \label{fig.rebuild.LD.33.33.34}
    }
    \caption{Rebuilding efficiency.}
    \label{fig.eva.rebuild}
\end{figure}

Another observation is that operation mixes do not noticeably affect the
rebuilding efficiency of all evaluated hash tables,
as shown by the comparison of Figures \ref{fig.rebuild.LD.5.5.90} and \ref{fig.rebuild.LD.33.33.34}.
This observation suggests that for \sns,
given a hash table with a specified number of nodes,
programmers can predict how long the algorithm will take to rebuild the hash table.


\subsection{Performance on different architectures}\label{sec.eva.arch}

We now evaluate the overall performance of \sn on ARM and PowerPC,
other two important architectures in industry.
The characteristics of the servers used were listed in Table \ref{table.platforms}.
The benchmarking framework is the same as in Section \ref{sec.eva.perf}.
Experimental results with different average load factors
are marked with different suffixes in Figure \ref{fig.arch}.
For example, \emph{HT-DHash-20} shows the results of \sn with the average load factor of 20.
 
\begin{figure}[!h]
    \centering
    \subfloat[IBM Power9]{
        \includegraphics[scale=0.16, angle=-90]{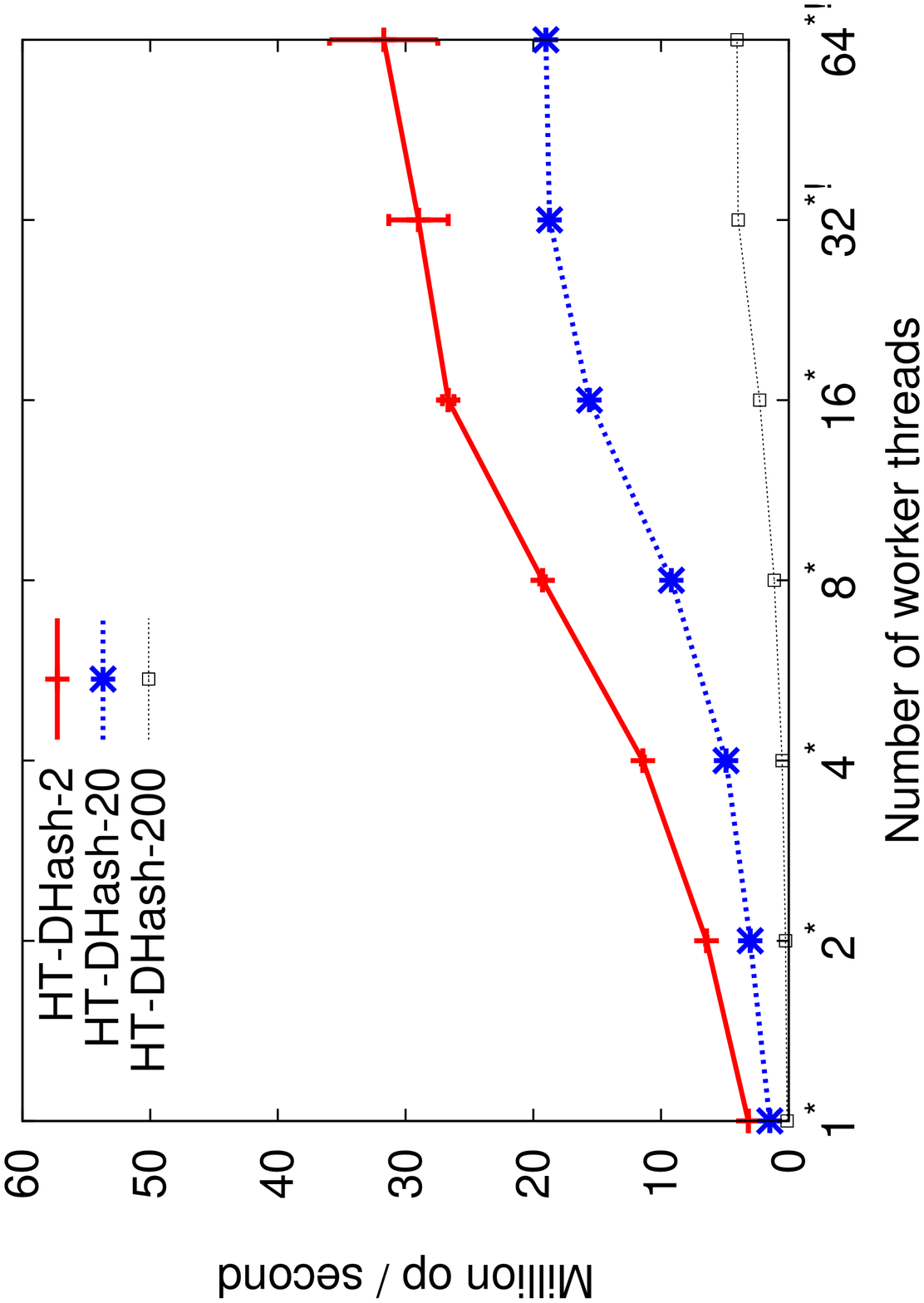}
        \label{fig.arch.PPC64}
    }
    \subfloat[Cavium ARMv8]{
        \includegraphics[scale=0.16, angle=-90]{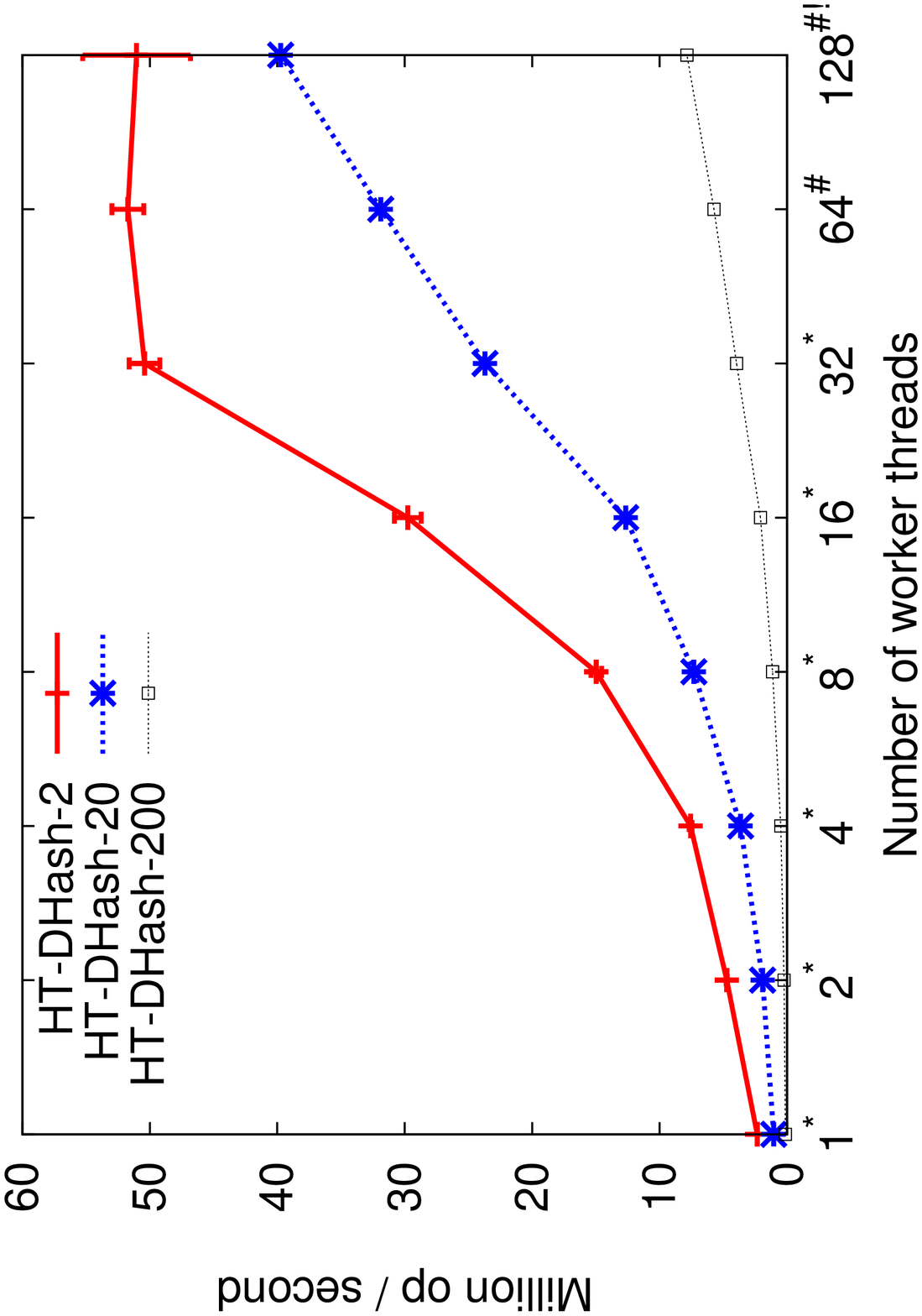}
        \label{fig.arch.ARM64}
    }
    \caption{Performance \sn on PowerPC and ARM.}
    \label{fig.arch}
\end{figure}

We make the following observation.
On both architectures, \sn scales well.
The overall performance of \sn increases nearly linearly
when the number of worker threads increases,
until worker threads oversubscribe CPU cores.
After that, the performance of \sn increases slightly or stays constant, 
but does not decrease.
Figure \ref{fig.arch} shows that
even if the average load factor of the hash table reaches 200,
\sn can provide the throughput of 4.1 and 7.9Mop/s
on IBM Power9 and ARMv8, respectively,
indicating that 
\sn is the algorithm of choice for real applications with heavy workloads.

\section{Conclusions}\label{sec.conclusion}

To overcome the hash collision problem, 
this paper presents \sns, a flexibly, efficient hash table
that can dynamically change its hash function on the fly.
\sn allows programmers to create specific implementations
that meet their requirements in terms of the algorithm's progress guarantee and performance.
We present the core technique to efficiently distribute nodes
from the old hash table to the new one in rebuilding,
and show that the result is highly scalable and robust using a variety of benchmarks
on three types of architectures.


\bibliographystyle{ACM-Reference-Format}
\bibliography{ms}


\end{document}